\definecolor{Gray}{gray}{0.85}
\newcolumntype{a}{>{\columncolor{Gray}}r}
\newtheorem{theorem}{Theorem}
\newtheorem{corollary}[theorem]{Corollary}
\newtheorem{lemma}[theorem]{Lemma}
\newtheorem{proposition}[theorem]{Proposition}
\theoremstyle{definition}
\newtheorem{definition}[theorem]{Definition}
\title{A note on tight projective $2$-designs}
\author{
Joseph~W.~Iverson\footnote{Department of Mathematics, Iowa State University, Ames, IA} 
\and 
Emily~J.~King\footnote{Department of Mathematics, Colorado State University, Fort Collins, CO} 
\and 
Dustin~G.~Mixon\footnote{Department of Mathematics, The Ohio State University, Columbus, OH} \footnote{Translational Data Analytics Institute, The Ohio State University, Columbus, OH}
}
\date{}
\begin{document}
\maketitle

\begin{abstract}
We study tight projective $2$-designs in three different settings.
In the complex setting, Zauner's conjecture predicts the existence of a tight projective $2$-design in every dimension.
Pandey, Paulsen, Prakash, and Rahaman recently proposed an approach to make quantitative progress on this conjecture in terms of the entanglement breaking rank of a certain quantum channel.
We show that this quantity is equal to the size of the smallest weighted projective $2$-design.
Next, in the finite field setting, we introduce a notion of projective $2$-designs, we characterize when such projective $2$-designs are tight, and we provide a construction of such objects.
Finally, in the quaternionic setting, we show that every tight projective $2$-design for $\mathbb{H}^d$ determines an equi-isoclinic tight fusion frame of $d(2d-1)$ subspaces of $\mathbb{R}^{d(2d+1)}$ of dimension~$3$.
\end{abstract}

\section{Introduction}

Let $S(\mathbb{C}^d)$ denote the sphere of $x\in\mathbb{C}^d$ with $\|x\|_2^2=1$, and consider its uniform probability measure $\sigma$.
We let $\operatorname{Hom}_d(t)$ denote the complex vector space spanned by all monomial functions $\mathbb{C}^d\to\mathbb{C}$ that map $z=(z_1,\ldots,z_d)$ to $z_1^{\alpha_1}\cdots z_d^{\alpha_d}\overline{z}_1^{\beta_1}\cdots \overline{z}_d^{\beta_d}$ with $t=\sum_j\alpha_j=\sum_j\beta_j$.
Next, we take $\Pi_d^{(t)}$ to denote orthogonal projection onto the symmetric subspace $(\mathbb{C}^d)_{\operatorname{sym}}^{\otimes t}$ of $(\mathbb{C}^d)^{\otimes t}$.
Finally, put $[n]:=\{1,\ldots,n\}$.
Having established the neccesssary notation, a \textbf{projective $t$-design} for $\mathbb{C}^d$ is defined to be any $\{x_k\}_{k\in[n]}$ in $S(\mathbb{C}^d)$ that satisfies the following equivalent properties:

\begin{proposition}[see~\cite{Waldron:17}]
\label{prop.proj2design}
Given $\{x_k\}_{k\in[n]}$ in $S(\mathbb{C}^d)$ and $t\in\mathbb{N}$, the following are equivalent:
\begin{itemize}
\item[(a)]
$\frac{1}{n}\sum_{k\in[n]}p(x_k)=\int_{S(\mathbb{C}^d)}p(x)d\sigma(x)$ for every $p\in\operatorname{Hom}_d(t)$.
\item[(b)]
$\frac{1}{n}\sum_{k\in[n]}(x_k^{\otimes t})(x_k^{\otimes t})^*=\binom{d+t-1}{t}^{-1}\cdot \Pi_d^{(t)}$.
\item[(c)]
$\frac{1}{n^2}\sum_{k\in[n]}\sum_{\ell\in[n]}|\langle x_k,x_\ell\rangle|^{2t}=\binom{d+t-1}{t}^{-1}$.
\end{itemize}
\end{proposition}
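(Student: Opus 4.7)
The plan is to prove $(a)\Leftrightarrow(b)$ and $(b)\Leftrightarrow(c)$.

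For $(a)\Leftrightarrow(b)$, I would observe that the matrix entries of $(x^{\otimes t})(x^{\otimes t})^*$ in the standard basis of $(\mathbb{C}^d)^{\otimes t}$ are exactly the monomials $x_{i_1}\cdots x_{i_t}\,\overline{x}_{j_1}\cdots\overline{x}_{j_t}$, and these span $\operatorname{Hom}_d(t)$. By linearity, (a) therefore holds if and only if the operator identity
\[
\tfrac{1}{n}\sum_k (x_k^{\otimes t})(x_k^{\otimes t})^* \;=\; \int_{S(\mathbb{C}^d)}(x^{\otimes t})(x^{\otimes t})^*\,d\sigma(x)
\]
holds. It remains to evaluate the right-hand integral. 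It is $U(d)$-equivariant, its image lies in the symmetric subspace, and the symmetric subspace carries an irreducible representation of $U(d)$; Schur's lemma then forces the integral to equal $c\,\Pi_d^{(t)}$ for some scalar $c$. Taking traces and using $\|x^{\otimes t}\|_2=1$ identifies $c=1/\dim(\mathbb{C}^d)_{\operatorname{sym}}^{\otimes t}=\binom{d+t-1}{t}^{-1}$, giving (b).

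For $(b)\Rightarrow(c)$, set $T:=\tfrac{1}{n}\sum_k(x_k^{\otimes t})(x_k^{\otimes t})^*$ and compute $\operatorname{tr}(T^2)$ in two ways. Direct expansion gives $\operatorname{tr}(T^2)=\tfrac{1}{n^2}\sum_{k,\ell}|\langle x_k^{\otimes t},x_\ell^{\otimes t}\rangle|^2=\tfrac{1}{n^2}\sum_{k,\ell}|\langle x_k,x_\ell\rangle|^{2t}$, which is the left-hand side of (c). On the other hand, (b) states $T=\binom{d+t-1}{t}^{-1}\Pi_d^{(t)}$, so $\operatorname{tr}(T^2)=\binom{d+t-1}{t}^{-2}\operatorname{tr}\Pi_d^{(t)}=\binom{d+t-1}{t}^{-1}$.

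The main step is $(c)\Rightarrow(b)$. The operator $T$ is positive semidefinite, its range lies inside the $D$-dimensional symmetric subspace (where $D:=\binom{d+t-1}{t}$), and $\operatorname{tr}(T)=\tfrac{1}{n}\sum_k\|x_k^{\otimes t}\|_2^2=1$. Applying the Cauchy--Schwarz inequality to the nonzero eigenvalues of $T$ yields $\operatorname{tr}(T^2)\geq(\operatorname{tr} T)^2/D=1/D$, with equality precisely when $T$ has $D$ equal positive eigenvalues, i.e., when $T=\tfrac{1}{D}\Pi_d^{(t)}$. Hypothesis (c) saturates this bound, so (b) follows. The subtle point here is the rank bound $\operatorname{rank}(T)\leq D$, which depends on each $x_k^{\otimes t}$ lying in the symmetric subspace; this is what pins the constant in (c) to the reciprocal of the symmetric-subspace dimension rather than to some weaker quantity.
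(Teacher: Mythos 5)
Your proposal is correct, but note that the paper itself offers no proof of this proposition: it is stated as a quotation of known results, with the citation \cite{Waldron:17}, so there is no internal argument to compare against. Your route is the standard one and all three steps check out. For (a)$\Leftrightarrow$(b), the entries of $(x^{\otimes t})(x^{\otimes t})^*$ are exactly the monomials $x_{i_1}\cdots x_{i_t}\overline{x}_{j_1}\cdots\overline{x}_{j_t}$, which span $\operatorname{Hom}_d(t)$, so (a) is indeed equivalent to the moment-operator identity; in the Schur step it is worth making explicit that the integral operator is positive semidefinite with range inside $(\mathbb{C}^d)^{\otimes t}_{\operatorname{sym}}$, hence vanishes on the orthogonal complement, so that irreducibility of the symmetric power under $U(d)$ pins it to $c\,\Pi_d^{(t)}$, and the trace normalization $\int\|x^{\otimes t}\|_2^2\,d\sigma=1$ gives $c=\binom{d+t-1}{t}^{-1}$. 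Your (c)$\Rightarrow$(b) argument --- Cauchy--Schwarz on the nonzero eigenvalues of $T$ with the rank bound $\operatorname{rank}(T)\leq\binom{d+t-1}{t}$, whose equality case forces $T=\binom{d+t-1}{t}^{-1}\Pi_d^{(t)}$ --- is precisely the equality case of the sharpened Welch bound that is the subject of the cited reference, so your proof is in the same spirit as the source the authors point to; you rightly flag that the rank bound coming from symmetry of the tensors is what makes the constant $\binom{d+t-1}{t}^{-1}$ (rather than $d^{-t}$) the correct threshold.
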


In words, the cubature rule Proposition~\ref{prop.proj2design}(a) says that, for the purposes of integration over the sphere $S(\mathbb{C}^d)$, a projective $2$-design ``fools'' every $p\in\operatorname{Hom}_d(t)$ by mimicking the entire sphere.
Note that one may ``trace out'' one of the $t$ subsystems in Proposition~\ref{prop.proj2design}(b) to show that every projective $t$-design is also a projective $(t-1)$-design.
Proposition~\ref{prop.proj2design}(b) says that $\{x_k^{\otimes t}\}_{k\in[n]}$ forms what frame theorists would call a \textit{unit norm tight frame}~\cite{CasazzaK:12} for the $\binom{d+t-1}{t}$-dimensional complex Hilbert space $(\mathbb{C}^d)^{\otimes t}_{\operatorname{sym}}$.
With this perspective, Proposition~\ref{prop.proj2design}(c) corresponds to the \textit{frame potential}~\cite{BenedettoF:03} of $\{x_k^{\otimes t}\}_{k\in[n]}$.
One may generalize Proposition~\ref{prop.proj2design}(c) to obtain notions of projective $t$-designs for real, quaternionic, and octonionic spaces~\cite{Munemasa:07}.
The complex case with $t=2$ is particularly relevant in quantum state tomography~\cite{Scott:06}, where it is desirable to take $n$ as small as possible.
This motivates the following result, which refers to $\{x_k\}_{k\in[n]}$ in $S(\mathbb{C}^d)$ as \textbf{equiangular} if
\[
|\{|\langle x_k,x_\ell\rangle|^2:k,\ell\in[n],k\neq \ell\}|
=1.
\]

\begin{proposition}[special case of Proposition~1.1 in~\cite{BannaiH:85}]
\label{prop.d^2 threshold}
Consider $X=\{x_k\}_{k\in[n]}$ in $S(\mathbb{C}^d)$.
\begin{itemize}
\item[(a)]
If $X$ is a projective $2$-design, then $n\geq d^2$ with equality precisely when $X$ is equiangular.
\item[(b)]
If $X$ is equiangular, then $n\leq d^2$ with equality precisely when $X$ is a projective $2$-design.
\end{itemize}
\end{proposition}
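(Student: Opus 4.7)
My plan is to prove both parts by combining the moment identities from Proposition~\ref{prop.proj2design}(c) with a dimension-counting argument in the real Hilbert space $\mathcal{H}$ of $d\times d$ Hermitian matrices equipped with the Hilbert--Schmidt inner product (so $\dim_{\mathbb{R}}\mathcal{H}=d^{2}$). The bridge is the rank-one projection $P_{k}:=x_{k}x_{k}^{*}$, which satisfies $\langle P_{k},P_{\ell}\rangle_{HS}=|\langle x_{k},x_{\ell}\rangle|^{2}$.

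For part (a), tracing out one subsystem in Proposition~\ref{prop.proj2design}(b) shows that every projective $2$-design is also a projective $1$-design, so
\[
\sum_{k,\ell\in[n]}|\langle x_{k},x_{\ell}\rangle|^{2}=\frac{n^{2}}{d}, \qquad \sum_{k,\ell\in[n]}|\langle x_{k},x_{\ell}\rangle|^{4}=\frac{2n^{2}}{d(d+1)}.
\]
I would subtract the $n$ diagonal terms from each identity and apply Cauchy--Schwarz to the $n(n-1)$ off-diagonal values of $|\langle x_{k},x_{\ell}\rangle|^{2}$, obtaining
\[
\frac{n(2n-d^{2}-d)}{d(d+1)}\;\geq\;\frac{n(n-d)^{2}}{d^{2}(n-1)},
\]
which after clearing denominators collapses to $(d-1)n(n-d^{2})\geq 0$. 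This yields $n\geq d^{2}$, and equality occurs exactly when Cauchy--Schwarz is saturated, i.e.\ when $|\langle x_{k},x_{\ell}\rangle|^{2}$ is constant on off-diagonal pairs, i.e.\ when $X$ is equiangular.

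For part (b), write $\alpha=|\langle x_{k},x_{\ell}\rangle|^{2}$ for $k\neq\ell$ and assume $\alpha<1$ (otherwise all $P_{k}$ coincide and $X$ is not a projective $2$-design for $d\geq 2$). The Gram matrix of $\{P_{k}\}\subset\mathcal{H}$ is $(1-\alpha)I_{n}+\alpha J_{n}$, which is positive definite, so $\{P_{k}\}$ is linearly independent and $n\leq\dim_{\mathbb{R}}\mathcal{H}=d^{2}$. At equality $n=d^{2}$, the set $\{P_{k}\}$ forms a basis of $\mathcal{H}$, so we may expand $I=\sum_{k}c_{k}P_{k}$. Taking Hilbert--Schmidt inner products against each $P_{j}$ gives a symmetric system forcing every $c_{k}$ to equal a common constant $c$, and matching traces yields $cn=d$; solving then produces $\alpha=(n-d)/(d(n-1))=1/(d+1)$. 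Substituting back,
\[
\sum_{k,\ell\in[n]}|\langle x_{k},x_{\ell}\rangle|^{4}=n+n(n-1)\alpha^{2}=\frac{2n^{2}}{d(d+1)},
\]
so by Proposition~\ref{prop.proj2design}(c), $X$ is a projective $2$-design.

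The main obstacle is not conceptual but computational: the algebraic simplification in (a) that collapses everything cleanly to $(d-1)n(n-d^{2})$, and the determination of $\alpha=1/(d+1)$ in (b), both require careful routine arithmetic rather than new ideas. Once these calculations are carried out, the rest reduces to standard linear algebra in $\mathcal{H}$.
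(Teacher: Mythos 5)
Your proof is correct, but note that the paper itself offers no proof of this proposition: it is imported wholesale as a special case of Proposition~1.1 in \cite{BannaiH:85}, where the bounds for tight $t$-designs in rank-one compact symmetric spaces come out of linear programming machinery. Your argument is a genuinely different, self-contained route specialized to $t=2$ over $\mathbb{C}$. For (a) you run the standard second-moment comparison: tracing out a subsystem gives the $t=1$ identity $\sum_{k,\ell}|\langle x_k,x_\ell\rangle|^2=n^2/d$ alongside the $t=2$ identity, and applying Cauchy--Schwarz to the off-diagonal values does, after clearing the (positive) denominators, collapse exactly to $(d-1)n(n-d^2)\geq 0$ --- I verified the algebra --- with Cauchy--Schwarz saturation precisely characterizing equiangularity. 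For (b) you use Gerzon's linear-independence argument in the $d^2$-dimensional real space of Hermitian matrices, and your dual-expansion of $I=\sum_k c_k P_k$ correctly forces $c_k\equiv d/n$ and $\alpha=\frac{n-d}{d(n-1)}=\frac{1}{d+1}$ at $n=d^2$, after which the fourth-moment check ($d^2(1+(d^2-1)/(d+1)^2)=2d^3/(d+1)$) closes the loop via Proposition~\ref{prop.proj2design}(c). What the citation buys is uniformity: Bannai--Hoggar's LP approach covers all $t$ and all of $\mathbb{R},\mathbb{C},\mathbb{H},\mathbb{O}$ at once, which the paper exploits elsewhere (Section~4 invokes the analogous bounds in other settings). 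What your route buys is transparency, elementarity, and the explicit common angle $\alpha=1/(d+1)$. Two small points to tidy: in (a), the division by $n-1$ and the conclusion from $(d-1)n(n-d^2)\geq0$ tacitly assume $n\geq 2$ and $d\geq 2$, both harmless (a single vector fails the $t=2$ potential for $d\geq2$, and $d=1$ is vacuous), but worth a sentence; in (b), your exclusion of $\alpha=1$ is needed for the bound itself, not just the equality clause --- under the paper's literal definition of equiangular, $n$ unit vectors spanning a single line have angle set $\{1\}$ and make $n\leq d^2$ false --- so you should state explicitly that the vectors are assumed to span distinct lines (as is implicit in \cite{BannaiH:85}) rather than fold this into a parenthetical about 2-designs.
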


In the case of equality $n=d^2$, $\{x_k\}_{k\in[n]}$ is known as a \textbf{tight projective $2$-design} for $\mathbb{C}^d$, which corresponds to an object in quantum physics known as a \textit{symmetric, informationally complete positive operator--valued measure}~\cite{Caves:99}.
In his Ph.D.\ thesis~\cite{Zauner:99}, Zauner conjectured that for every $d>1$, there exists a tight projective $2$-design for $\mathbb{C}^d$ of a particular form.
To date, there are only finitely many $d\in\mathbb{N}$ for which a tight projective $2$-design is known to exist~\cite{FuchsHS:17,Grassl:online,Flammia:online}, and the conjecture is apparently related to the Stark conjectures in algebraic number theory~\cite{Kopp:19}.
A solution to Zauner's conjecture will be rewarded with a $2021$~EUR prize from the National Quantum Information Centre in Poland~\cite{KCIK:online}.

Pandey, Paulsen, Prakash, and Rahaman~\cite{PandeyPPR:20} recently proposed a new approach to Zauner's conjecture.
They identify an explicit quantum channel $\mathfrak{Z}_d\colon\mathbb{C}^{d\times d}\to\mathbb{C}^{d\times d}$ whose so-called \textit{entanglement breaking rank} is $\operatorname{ebr}(\mathfrak{Z}_d)\geq d^2$, and furthermore, equality holds if and only if there exists a tight projective $2$-design for $\mathbb{C}^d$.
As such, any new upper bound on this entanglement breaking rank represents quantitative progress towards Zauner's conjecture.
In addition, Pandey et al.\ consider various analytic approaches to obtain such bounds in small dimenions.

As a consequence of Proposition~\ref{prop.d^2 threshold}, every tight projective $2$-design for $\mathbb{C}^d$ is necessarily an \textbf{equiangular tight frame (ETF)}, that is, a unit norm tight frame that is also equiangular.
ETFs correspond to optimal codes in projective space that achieve equality in the \textit{Welch bound}~\cite{Welch:74} (also known as the \textit{simplex bound}~\cite{ConwayHS:96}).
By virtue of this optimality, ETFs find applications in wireless communication~\cite{StrohmerH:03}, compressed sensing~\cite{BandeiraFMW:13}, and digital fingerprinting~\cite{MixonQKF:13}.
Motivated by these applications, many ETFs were recently constructed using various mixtures of algebra and combinatorics~\cite{StrohmerH:03,XiaZG:05,DingF:07,FickusMT:12,FickusJMP:18,BodmannK:20,IversonJM:20b,IversonJM:20,IversonM:online,IversonM:online2}; see~\cite{FickusM:15} for a survey.
Despite this flurry of work, several problems involving ETFs (such as Zauner's conjecture) remain open, and a finite field model was recently proposed to help study these remaining problems~\cite{GreavesIJM:20,GreavesIJM:20b}.

Notice that if $\{x_k\}_{k\in[n]}$ is a tight projective $2$-design for $\mathbb{C}^d$, then Propositions~\ref{prop.proj2design} and~\ref{prop.d^2 threshold} together imply that $\{x_k^{\otimes2}\}_{k\in[n]}$ is an ETF for $(\mathbb{C}^d)^{\otimes2}_{\operatorname{sym}}$.
This suggests another approach to Zauner's conjecture~\cite{ApplebyBFG:19} in which one seeks ETFs of $d^2$ vectors in the $\binom{d+1}{2}$-dimensional complex Hilbert space $(\mathbb{C}^d)^{\otimes2}_{\operatorname{sym}}$.
There are several known constructions of such ETFs~\cite{FickusM:15,BodmannK:20,IversonJM:20b,IversonJM:20}, but in order to correspond to a tight projective $2$-design, the ETF must consist of rank-$1$ symmetric tensors.

We note that one may leverage linear programming bounds to obtain analogous results to Proposition~\ref{prop.d^2 threshold} that relate projective $t$-designs over different spaces to different sized angle sets~\cite{BannaiH:85}.
In the real case, tight projective $2$-designs have size $\binom{d+1}{2}$ and are only known to exist for $d\in\{2,3,7,23\}$, with $d=119$ being the smallest dimension for which existence is currently unknown; see~\cite{LemmensS:73,Makhnev:02,BannaiMV:05,NebeV:13,Gillespie:18}.
In the quaternion case, tight projective $2$-designs have size $2d(d-1)$; they are only known to exist for $d\in\{2,3\}$~\cite{CohnKM:16,EtTaoui:20}, and there is numerical evidence that they do not exist for $d\in\{4,5\}$~\cite{CohnKM:16}.
The octonions are only capable of supporting a projective space for $d\in\{2,3\}$, and tight projective $2$-designs exist in both cases~\cite{CohnKM:16}.

In this paper, we study tight projective $2$-designs in three different settings.
In Section~2, we consider the complex setting, specifically, the new quantitative approach of Pandey et al.~\cite{PandeyPPR:20}.
Here, we show that $\operatorname{ebr}(\mathfrak{Z}_d)$ is precisely the size of the smallest \textit{weighted} projective $2$-design for $\mathbb{C}^d$.
This identification allows us to find new upper bounds on $\operatorname{ebr}(\mathfrak{Z}_d)$.
Next, in Section~3, we use Proposition~\ref{prop.proj2design}(b) to find an analog of projective $2$-designs in a finite field setting.
This continues the line of inquiry from~\cite{GreavesIJM:20,GreavesIJM:20b} of tackling hard problems from frame theory in a finite field model.
In this setting, we obtain an analog of Proposition~\ref{prop.d^2 threshold}, and then we construct a family of tight projective $2$-designs.
Finally in Section~4, we consider the quaternionic setting, where we take inspiration from the fact that a tight projective $2$-design for $\mathbb{C}^d$ can be used to produce an ETF of $d^2$ vectors in $(\mathbb{C}^d)^{\otimes2}_{\operatorname{sym}}$.
In particular, we show how a tight projective $2$-design for $\mathbb{H}^d$ can be used to produce an equi-isoclinic tight fusion frame of $2d(d-1)$ different $3$-dimensional subspaces of the $d(2d+1)$-dimensional real Hilbert space of $d\times d$ quaternionic anti-Hermitian matrices.

\section{The complex setting}

A linear map $\Phi\colon\mathbb{C}^{d\times d}\to \mathbb{C}^{m\times m}$ is said to be \textbf{entanglement breaking} if it admits an \textbf{entanglement breaking decomposition}:
\begin{equation}
\label{eq.ebd}
\Phi(X)
=\sum_{k\in[n]} R_kXR_k^*,
\qquad
\sum_{k\in[n]} R_k^*R_k=I_d,
\qquad
\operatorname{rank}R_k=1
~~
\text{for every}
~~
k\in[n].
\end{equation}
The \textbf{entanglement breaking rank} of $\Phi$, denoted by $\operatorname{ebr}(\Phi)$, is the smallest $n$ for which there exists $\{R_k\}_{k\in[n]}$ in $\mathbb{C}^{m\times d}$ satisfying \eqref{eq.ebd}.
Let $\{e_i\}_{i\in[d]}$ denote the standard basis in $\mathbb{C}^d$.
The \textbf{Choi matrix} of $\Phi$ is given by
\[
C_\Phi
:=\sum_{i\in[d]}\sum_{j\in[d]}e_ie_j^*\otimes\Phi(e_ie_j^*)
\in \mathbb{C}^{d\times d} \otimes \mathbb{C}^{m\times m}.
\]
In words, $C_\Phi$ is a $d\times d$ block array whose $(i,j)$th block is $\Phi(e_ie_j^*)\in\mathbb{C}^{m\times m}$.
One may use $C_\Phi$ to discern useful properties about $\Phi$.
For example, $\Phi$ is completely positive when $C_\Phi$ is positive semidefinite; see Theorem~2.22 in~\cite{Watrous:18}.

We are interested in the \textit{quantum depolarizing channel} $\mathfrak{Z}_d\colon\mathbb{C}^{d\times d}\to\mathbb{C}^{d\times d}$ defined by
\[
\mathfrak{Z}_d(X)
:=\frac{1}{d+1}\Big(X+\operatorname{tr}X\cdot I_d\Big).
\]
One may verify that $\mathfrak{Z}_d$ is entanglement breaking as a consequence of its scaled Choi matrix $\frac{1}{d} C_{\mathfrak{Z}_d}$ being a separable bipartite state (specifically, the \textit{isotropic state} with $\lambda=1/d$ from Example~7.25 in~\cite{Watrous:18}).
Pandey, Paulsen, Prakash, and Rahaman~\cite{PandeyPPR:20} pointed to this quantum channel as an opportunity for quantitative progress on Zauner's conjecture:

\begin{proposition}[cf.\ Corollary~III.3 and Theorem~V.3 in~\cite{PandeyPPR:20}]\
\label{prop.ebrZ}
\begin{itemize}
\item[(a)]
$\operatorname{ebr}(\mathfrak{Z}_d)\geq d^2$, with equality if and only if there exists a tight projective $2$-design for $\mathbb{C}^d$.
\item[(b)]
$\operatorname{ebr}(\mathfrak{Z}_d)\leq d^2+d$ whenever $d$ is a prime power.
\end{itemize}
\end{proposition}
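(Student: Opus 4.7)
The plan is to analyze entanglement breaking decompositions of $\mathfrak{Z}_d$ through its Choi matrix. A direct calculation gives
\[
C_{\mathfrak{Z}_d} = \tfrac{1}{d+1}\bigl(I_{d^2}+\psi\psi^*\bigr), \qquad \psi := \sum_{i\in[d]} e_i\otimes e_i,
\]
which has eigenvalues $1$ (once) and $\tfrac{1}{d+1}$ ($d^2-1$ times), so $\operatorname{rank}(C_{\mathfrak{Z}_d})=d^2$. Writing a generic rank-one $R_k=y_k x_k^*$ with $\|x_k\|=1$, the EB condition rewrites as
\[
\sum_{k\in[n]}(\overline{x_k}\otimes y_k)(\overline{x_k}\otimes y_k)^* = C_{\mathfrak{Z}_d}.
\]
Since a sum of $n$ rank-one positive operators has rank at most $n$, this forces $n\geq d^2$ and proves the lower bound in~(a).

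For sufficiency in the equivalence of~(a), given a tight projective $2$-design $\{x_k\}_{k\in[d^2]}$ I would set $R_k:=x_k x_k^*/\sqrt{d}$. Proposition~\ref{prop.proj2design}(b) at $t=2$ together with the identity $2\Pi_d^{(2)}=I_{d^2}+S$, where $S$ is the swap on $(\mathbb{C}^d)^{\otimes 2}$, gives $\sum_k(x_k x_k^*)^{\otimes 2}=\tfrac{d}{d+1}(I_{d^2}+S)$. Applying $\operatorname{tr}_2\bigl(\,\cdot\,(I_d\otimes X)\bigr)$ to both sides converts this tensor identity into
\[
\sum_{k\in[d^2]}(x_k^* X x_k)\, x_k x_k^* = \tfrac{d}{d+1}\bigl(X+\operatorname{tr}(X)I_d\bigr),\qquad X\in\mathbb{C}^{d\times d},
\]
which after dividing by $d$ recovers $\mathfrak{Z}_d(X)=\sum_k R_k X R_k^*$. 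The trace-preserving constraint $\sum_k R_k^* R_k=I_d$ then reduces to the $1$-design identity $\sum_k x_k x_k^* = d\, I_d$.

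For the converse, suppose $n=d^2$. The $d^2$ vectors $v_k:=\overline{x_k}\otimes y_k$ span $\mathbb{C}^{d^2}$ and satisfy $\sum_k v_k v_k^*=C_{\mathfrak{Z}_d}$, so they form a basis biorthogonal to $\{C_{\mathfrak{Z}_d}^{-1}v_k\}$. Sherman--Morrison yields $C_{\mathfrak{Z}_d}^{-1}=(d+1)I_{d^2}-\psi\psi^*$, and the diagonal relations $v_k^*C_{\mathfrak{Z}_d}^{-1}v_k=1$ (using $v_k^*\psi=\langle y_k,x_k\rangle$ and $\|v_k\|^2=\|y_k\|^2$) become
\[
|\langle y_k,x_k\rangle|^2 = (d+1)\|y_k\|^2-1.
\]
Cauchy--Schwarz $|\langle y_k,x_k\rangle|^2\leq\|y_k\|^2$ forces $\|y_k\|^2\leq 1/d$; summing and comparing with $\sum_k\|y_k\|^2=d$ (which follows from $\mathfrak{Z}_d(I_d)=I_d$) saturates every inequality, so $\|y_k\|^2=1/d$ and $y_k$ is parallel to $x_k$. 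Each $R_k$ is therefore a unit-modulus scalar times $x_k x_k^*/\sqrt{d}$, and reversing the computation from the preceding paragraph shows $\{x_k\}$ is a tight projective $2$-design.

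For~(b), when $d$ is a prime power the existence of $d+1$ mutually unbiased bases supplies $n=d(d+1)=d^2+d$ unit vectors whose union forms a (non-tight) projective $2$-design, as verified through Proposition~\ref{prop.proj2design}(c). Repeating the partial-trace computation from~(a) (now with $\sum_k(x_k x_k^*)^{\otimes 2}=I_{d^2}+S$, since the scaling in Proposition~\ref{prop.proj2design}(b) gives coefficient $n/\binom{d+1}{2}=2$) produces $\sum_k(x_k^* X x_k)\, x_k x_k^* = X+\operatorname{tr}(X)I_d$, so the assignment $R_k:=x_k x_k^*/\sqrt{d+1}$ furnishes an EB decomposition of $\mathfrak{Z}_d$ with $d^2+d$ rank-one terms. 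The main obstacle is the Cauchy--Schwarz tightness step in the converse of~(a): recognizing that the prefactor $(d+1)$ in the biorthogonality identity together with the trace sum precisely saturates Cauchy--Schwarz is the one nontrivial observation; everything else is routine computation.
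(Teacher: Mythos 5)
Your proposal is correct, but it takes a genuinely different route from the paper. The paper never proves Proposition~\ref{prop.ebrZ} directly: it obtains it as a corollary of the stronger Theorem~\ref{thm.weighted2designtoebr}, which identifies $\operatorname{ebr}(\mathfrak{Z}_d)$ with the size of the smallest \emph{weighted} projective $2$-design. That argument passes to $T\circ\mathfrak{Z}_d$, whose Choi matrix is $\frac{2}{d+1}\Pi_d^{(2)}$ (Lemmas~\ref{lem.ebrofTofZ}--\ref{lem.choiofTofZ}), and in the hard direction shows that every $b_k\otimes a_k$ in a rank-one Choi decomposition is annihilated against antisymmetric tensors, hence symmetric and of the form $\sqrt{dw_k}\,x_k\otimes x_k$; the threshold $n\geq d^2$ and the rigidity of the weights at $n=d^2$ are then imported from Scott's Theorem~4 in~\cite{Scott:06}, and part~(b) from known $2$-designs. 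You instead work directly with $C_{\mathfrak{Z}_d}=\frac{1}{d+1}(I_{d^2}+\psi\psi^*)$: full rank gives $n\geq d^2$ at once, and at $n=d^2$ the vectors $v_k=\overline{x}_k\otimes y_k$ form a basis, so the biorthogonality relations $v_k^*C_{\mathfrak{Z}_d}^{-1}v_k=1$, Sherman--Morrison, and Cauchy--Schwarz saturation force $y_k\parallel x_k$ with $\|y_k\|^2=1/d$ --- an elementary rigidity argument that replaces both the transpose trick and the weighted-design machinery. What each approach buys: yours is self-contained and avoids any appeal to~\cite{Scott:06}, but it is tied to $n=d^2$ (biorthogonality needs the $v_k$ to be a basis), so it could not establish Theorem~\ref{thm.weighted2designtoebr} for general $n$, whereas the paper's symmetric-tensor argument works at any size and yields the improved bounds in the subsequent corollary. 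Two steps you compress are fine but deserve a line each: ``reversing the computation'' should be justified by testing $\sum_k(x_kx_k^*)^{\otimes2}$ and $\frac{d}{d+1}(I_{d^2}+S)$ against all $X\otimes Y$, since the partial trace alone is not injective; and the fact that a union of $d+1$ mutually unbiased bases is a projective $2$-design is the frame-potential computation $\frac{1}{n^2}\bigl(n+n\cdot d^2\cdot d^{-2}\bigr)=\frac{2}{n}=\binom{d+1}{2}^{-1}$ with $n=d(d+1)$, verifying Proposition~\ref{prop.proj2design}(c).
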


We say unit vectors $\{x_k\}_{k\in[n]}$ in $\mathbb{C}^d$ form a \textbf{weighted projective $t$-design} if there exist weights $\{w_k\}_{k\in[n]}$ such that
\[
\sum_{k\in[n]}w_k(x_k^{\otimes t})(x_k^{\otimes t})^*
=\tbinom{d+t-1}{t}^{-1}\cdot\Pi_d^{(t)},
\qquad
\sum_{k\in[n]}w_k=1,
\qquad
w_k\geq0,
\qquad
k\in[n].
\]
Notice that every projective $t$-design is a weighted projective $t$-design with weights $w_k=1/n$.
In addition, it is known that every weighted projective $2$-design for $\mathbb{C}^d$ has size at least $d^2$, and if equality holds, then the weights are all $1/n$; see Theorem~4 in~\cite{Scott:06}.
What follows is the main result of this section, of which Proposition~\ref{prop.ebrZ} is a corollary:

\begin{theorem}
\label{thm.weighted2designtoebr}
The smallest weighted projective $2$-design for $\mathbb{C}^{d}$ has size $\operatorname{ebr}(\mathfrak{Z}_d)$.
\end{theorem}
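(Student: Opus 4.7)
The plan is to establish the double inequality $\operatorname{ebr}(\mathfrak{Z}_d) \leq N \leq \operatorname{ebr}(\mathfrak{Z}_d)$, where $N$ denotes the size of the smallest weighted projective $2$-design for $\mathbb{C}^d$. Both directions will be mediated by the Choi matrix $C_{\mathfrak{Z}_d}$ together with the key observation that its partial transpose over the first tensor factor satisfies $(C_{\mathfrak{Z}_d})^{T_1} = \tfrac{1}{d+1}(I_{d^2} + F) = \tfrac{2}{d+1}\Pi_d^{(2)}$, where $F$ is the swap on $\mathbb{C}^d \otimes \mathbb{C}^d$.

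For $\operatorname{ebr}(\mathfrak{Z}_d) \leq N$, I would start with a weighted $2$-design $\{x_k\}_{k\in[n]}$ with weights $\{w_k\}$ and form rank-one Kraus operators $R_k := \sqrt{d w_k}\, x_k x_k^*$. The identity $\Pi_d^{(2)} = (I+F)/2$ together with the partial-trace computation $\operatorname{Tr}_1\bigl((X\otimes I_d) \Pi_d^{(2)}\bigr) = \tfrac{1}{2}(X + \operatorname{tr}(X)\, I_d)$ will give $\sum_k R_k X R_k^* = \mathfrak{Z}_d(X)$, while tracing out one leg of the design condition yields $\sum_k w_k x_k x_k^* = I_d/d$, and hence $\sum_k R_k^* R_k = I_d$. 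This exhibits an entanglement-breaking decomposition of $\mathfrak{Z}_d$ with $n$ rank-one Kraus operators.

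For $N \leq \operatorname{ebr}(\mathfrak{Z}_d)$, I would take any entanglement-breaking decomposition of $\mathfrak{Z}_d$ with $n$ rank-one Kraus operators $R_k = y_k x_k^*$ (normalizing $\|x_k\|=1$) and compute the Kraus expansion of the Choi matrix: $C_{\mathfrak{Z}_d} = \sum_k \overline{x_k x_k^*} \otimes y_k y_k^*$. Taking partial transpose on the first system and equating with the direct formula above yields $\sum_k (x_k \otimes y_k)(x_k \otimes y_k)^* = \tfrac{2}{d+1}\Pi_d^{(2)}$. Since the right-hand side is supported on $(\mathbb{C}^d)^{\otimes 2}_{\operatorname{sym}}$ and each summand is a rank-one positive operator, positivity forces each simple tensor $x_k \otimes y_k$ to lie in the symmetric subspace; this can only happen if $y_k = \lambda_k x_k$ for some $\lambda_k \in \mathbb{C}$. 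Setting $w_k := |\lambda_k|^2/d$, the displayed identity collapses to $\sum_k w_k (x_k x_k^*)^{\otimes 2} = \binom{d+1}{2}^{-1}\Pi_d^{(2)}$, while the trace-preserving condition forces $\sum_k |\lambda_k|^2 = d$, so $w_k \geq 0$ and $\sum_k w_k = 1$. Thus $\{x_k\}$ with weights $\{w_k\}$ is a weighted projective $2$-design on at most $n$ vectors.

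The main obstacle is the rigidity step in the second direction: once one recognizes that $(C_{\mathfrak{Z}_d})^{T_1}$ is proportional to the symmetric-subspace projection $\Pi_d^{(2)}$, the positivity-plus-range argument forces every Kraus operator to the symmetric form $\lambda_k x_k x_k^*$ without inflating the number of terms, which is precisely what converts an arbitrary entanglement-breaking decomposition of size $n$ into a weighted projective $2$-design on the same $n$ vectors.
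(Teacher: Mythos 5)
Your proposal is correct and is essentially the paper's proof in Choi-matrix language: your partial transpose identity $(C_{\mathfrak{Z}_d})^{T_1}=\tfrac{2}{d+1}\Pi_d^{(2)}$ is the same manipulation as the paper's passage to $T\circ\mathfrak{Z}_d$ via Lemmas~\ref{lem.ebrofTofZ}, \ref{lem.decompvschoi}, and~\ref{lem.choiofTofZ}, since conjugating the first vector of each rank-one Kraus operator preserves the rank-one structure, the trace-preserving condition, and the number of terms. Your rigidity step---antisymmetric vectors are annihilated by $\sum_{k}(x_k\otimes y_k)(x_k\otimes y_k)^*$, forcing each simple tensor into the symmetric subspace and hence $y_k=\lambda_k x_k$---is exactly the paper's argument that each $b_k\otimes a_k$ is symmetric and of the form $\sqrt{dw_k}\,x_k\otimes x_k$, with only the cosmetic difference that you verify the easy direction by a direct partial-trace computation rather than through the Choi correspondence.
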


In fact, one may use Theorem~\ref{thm.weighted2designtoebr} to improve upon Proposition~\ref{prop.ebrZ}(b) by collecting various weighted $2$-designs from the literature.
Specifically, Theorem~4.1, Proposition~4.2, and Corollary~4.2 in~\cite{RoyS:07}, and Corollaries~4.4 and~4.6 in~\cite{BodmannH:16} give the following: 

\begin{corollary}\
\begin{itemize}
\item[(a)]
$\operatorname{ebr}(\mathfrak{Z}_d)\leq kd^2+2d$ whenever $kd+1$ is a prime power with $k\in\mathbb{N}$.
\item[(b)]
$\operatorname{ebr}(\mathfrak{Z}_d)\leq d^2+(p+1)d$ whenever $d+1=p^k$ with $p$ prime and $k\in\mathbb{N}$.
\item[(c)]
$\operatorname{ebr}(\mathfrak{Z}_d)\leq d^2+1$ whenever $d-1$ is a prime power.
\item[(d)]
$\operatorname{ebr}(\mathfrak{Z}_d)\leq d^2+d-1$ whenever $d$ is a prime power.
\end{itemize}
\end{corollary}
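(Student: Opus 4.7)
The plan is to invoke Theorem~\ref{thm.weighted2designtoebr} as a black box, which identifies $\operatorname{ebr}(\mathfrak{Z}_d)$ with the smallest size of a weighted projective $2$-design for $\mathbb{C}^d$. Under this identification, each of the four inequalities in the corollary becomes the assertion that, under the stated number-theoretic hypothesis on $d$, there exists a weighted projective $2$-design for $\mathbb{C}^d$ whose cardinality equals the right-hand side. Thus the entire proof amounts to producing four such families of weighted designs, one per part.

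Having reduced the problem in this way, I would then match each part to an existing construction in the literature. For part (a), when $kd+1$ is a prime power, Theorem~4.1 of~\cite{RoyS:07} assembles mutually unbiased bases (or an analogous affine-resolvable structure) in the appropriate dimension to yield a weighted projective $2$-design of size $kd^2+2d$. For part (b), when $d+1=p^k$, Proposition~4.2 together with Corollary~4.2 of~\cite{RoyS:07} produces a weighted projective $2$-design of size $d^2+(p+1)d$. Parts (c) and (d) are handled in the same spirit by Corollaries~4.4 and~4.6 of~\cite{BodmannH:16}: when $d-1$ is a prime power, a weighted projective $2$-design of size $d^2+1$ is constructed, and when $d$ is itself a prime power, one of size $d^2+d-1$. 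Applying Theorem~\ref{thm.weighted2designtoebr} to each of these four constructions immediately yields the corresponding bound on $\operatorname{ebr}(\mathfrak{Z}_d)$.

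The ``mathematical'' content of the corollary therefore lives entirely in Theorem~\ref{thm.weighted2designtoebr} and in the cited construction papers; what remains is a translation exercise. The one point that genuinely needs care is to confirm that each cited construction meets the precise definition of a weighted projective $2$-design used in this paper, namely unit vectors $\{x_k\}_{k\in[n]}$ with nonnegative weights summing to $1$ and satisfying the operator identity $\sum_k w_k (x_k^{\otimes 2})(x_k^{\otimes 2})^* = \binom{d+1}{2}^{-1}\Pi_d^{(2)}$ from Proposition~\ref{prop.proj2design}(b). Since both~\cite{RoyS:07} and~\cite{BodmannH:16} phrase their results in exactly this language (possibly after trivial renormalization of the weights), this verification is conventional rather than substantive. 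The main obstacle, accordingly, is simply bookkeeping: arithmetically reconciling the cardinality parameter that appears in each source with the expression on the right-hand side of (a)--(d).
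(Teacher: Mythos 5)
Your proposal matches the paper's proof exactly: the paper likewise reduces each bound via Theorem~\ref{thm.weighted2designtoebr} to the existence of a weighted projective $2$-design of the stated size, and then cites the very same constructions (Theorem~4.1, Proposition~4.2, and Corollary~4.2 of~\cite{RoyS:07} for parts (a)--(b), and Corollaries~4.4 and~4.6 of~\cite{BodmannH:16} for parts (c)--(d)). Your added remark about checking that the cited designs conform to the paper's normalization of Proposition~\ref{prop.proj2design}(b) is the only verification the paper leaves implicit, and you correctly identify it as routine.
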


Since $\mathfrak{Z}_d$ is entanglement breaking, Theorem~\ref{thm.weighted2designtoebr} also implies the existence of weighted projective $2$-designs; we note that this also follows from the main result in~\cite{SeymourZ:84}.

\begin{corollary}
For each $d\in\mathbb{N}$, there is a weighted projective $2$-design for $\mathbb{C}^d$ of size $\binom{d+1}{2}^2$.
\end{corollary}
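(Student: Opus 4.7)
My plan is to prove the statement directly via Carathéodory's theorem. (Alternatively, the same argument yields $\operatorname{ebr}(\mathfrak{Z}_d)\le \binom{d+1}{2}^2$, so one could instead invoke Theorem~\ref{thm.weighted2designtoebr}; the direct approach is slightly cleaner and avoids unpacking the Choi matrix.)

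First, I would restate the goal in operator form using Proposition~\ref{prop.proj2design}(b): constructing a weighted projective $2$-design of size $n$ for $\mathbb{C}^d$ amounts to expressing the target $\binom{d+1}{2}^{-1}\Pi_d^{(2)}$ as a convex combination $\sum_{k\in[n]}w_k(x_k^{\otimes 2})(x_k^{\otimes 2})^*$ with each $x_k\in S(\mathbb{C}^d)$ and nonnegative weights summing to $1$.

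Next, I would verify that the target lies in the convex hull of the compact set $K:=\{(x^{\otimes 2})(x^{\otimes 2})^* : x\in S(\mathbb{C}^d)\}$. This is immediate by integrating against the uniform measure $\sigma$ on the full sphere and invoking Proposition~\ref{prop.proj2design}(a,b):
\[
\int_{S(\mathbb{C}^d)}(x^{\otimes 2})(x^{\otimes 2})^*\,d\sigma(x)=\binom{d+1}{2}^{-1}\Pi_d^{(2)}.
\]

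Finally, I would apply Carathéodory's theorem. The key observation is that $K$ sits inside the real-affine slice of trace-one Hermitian operators on the symmetric subspace $(\mathbb{C}^d)^{\otimes 2}_{\operatorname{sym}}$. Since that subspace has complex dimension $\binom{d+1}{2}$, Hermitian operators on it form a real vector space of dimension $\binom{d+1}{2}^2$, so the trace-one slice has real affine dimension $\binom{d+1}{2}^2-1$. Carathéodory then produces a convex combination of the target with at most $\binom{d+1}{2}^2$ elements of $K$; padding with extra elements of $K$ assigned weight $0$ yields a weighted projective $2$-design of exactly the claimed size. The only step that demands genuine care is the affine-dimension count for the trace-one slice; beyond that, the argument is a routine application of Carathéodory to a well-chosen convex hull, so I anticipate no real obstacle.
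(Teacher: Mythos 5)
Your proof is correct, but it takes a genuinely different route from the paper. The paper's proof leans on the machinery it just developed: since $\mathfrak{Z}_d$ is entanglement breaking, Theorem~\ref{thm.weighted2designtoebr} supplies a finite weighted projective $2$-design $\{x_k\}_{k\in[n]}$ to start from, and then $\Pi_d^{(2)}$ is expressed in the \emph{conic} hull of the finite set $\{(x_k^{\otimes2})(x_k^{\otimes2})^*\}_{k\in[n]}$ inside the $\binom{d+1}{2}^2$-dimensional real space of Hermitian operators on $(\mathbb{C}^d)^{\otimes2}_{\operatorname{sym}}$; conic Carath\'{e}odory then yields at most $\binom{d+1}{2}^2$ terms, and the trace normalization $\sum_{k\in S}w_k=1$ falls out afterwards. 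You instead bypass Theorem~\ref{thm.weighted2designtoebr} entirely, starting from the continuous average $\int_{S(\mathbb{C}^d)}(x^{\otimes2})(x^{\otimes2})^*\,d\sigma(x)=\binom{d+1}{2}^{-1}\Pi_d^{(2)}$ and applying \emph{convex} Carath\'{e}odory in the trace-one affine slice, whose real affine dimension $\binom{d+1}{2}^2-1$ gives the same count. Your approach is more self-contained and elementary (it is essentially the folklore barycenter argument, in the spirit of the Seymour--Zaslavsky reference the paper cites as an alternative), at the cost of two points of care that you should make explicit: first, Proposition~\ref{prop.proj2design} as stated concerns finite sets, so the displayed integral identity needs its own justification (unitary invariance of $\sigma$ plus Schur's lemma on the symmetric subspace, or irreducibility of the $U(d)$ action there); second, you should note that the barycenter of a probability measure on the compact set $K$ lies in $\operatorname{conv}(K)$, which holds because in finite dimensions $\operatorname{conv}(K)$ is compact, hence closed, so a strict separating functional would contradict the integral. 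With those two remarks added, your argument is complete, and the weight-zero padding correctly reconciles the Carath\'{e}odory bound with the stated exact size, matching the convention implicit in the paper's own proof.
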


\begin{proof}
Since $\mathfrak{Z}_d$ is entanglement breaking, Theorem~\ref{thm.weighted2designtoebr} promises a weighted projective $2$-design $\{x_k\}_{k\in[n]}$ for $\mathbb{C}^d$.
Then $\Pi_d^{(2)}$ resides in the conic hull of $\{(x_k^{\otimes2})(x_k^{\otimes 2})^*\}_{k\in[n]}$, which in turn is contained in the $\binom{d+1}{2}^2$-dimensional real vector space of Hermitian operators over $(\mathbb{C}^{d})^{\otimes2}_{\operatorname{sym}}$.
By Carath\'{e}odory's theorem, there exists $S\subseteq[n]$ with $|S|\leq \binom{d+1}{2}^2$ and weights $w_k\geq0$ for $k\in S$ such that
\[
\sum_{k\in S}w_k(x_k^{\otimes2})(x_k^{\otimes2})^*
=\tbinom{d+1}{2}^{-1}\cdot\Pi_d^{(2)}.
\]
Furthermore, taking the trace of both sides reveals that $\sum_{k\in S}w_k=1$.
As such, $\{x_k\}_{k\in S}$ is a weighted projective $2$-design.
\end{proof}

The remainder of this section proves Theorem~\ref{thm.weighted2designtoebr}.
We first collect a few helpful lemmas.
Let $T\colon\mathbb{C}^{d\times d}\to\mathbb{C}^{d\times d}$ denote the tranposition operator defined by $T(X):=X^\top$.

\begin{lemma}[cf.\ Proposition~III.5 in~\cite{PandeyPPR:20}]
\label{lem.ebrofTofZ}
It holds that $\operatorname{ebr}(\Phi)=\operatorname{ebr}(T\circ\Phi)$ with
\[
\Phi(X)
=\sum_{k\in[n]} x_ky_k^\top X(x_ky_k^\top)^*
\qquad
\Longleftrightarrow
\qquad
(T\circ\Phi)(X)
=\sum_{k\in[n]} \overline{x}_ky_k^\top X(\overline{x}_ky_k^\top)^*.
\]
\end{lemma}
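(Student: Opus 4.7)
The plan is to exhibit an explicit involution on rank-one entanglement breaking decompositions that carries those of $\Phi$ to those of $T \circ \Phi$ while preserving the number of Kraus operators. First I would parametrize each rank-one Kraus operator $R_k \in \mathbb{C}^{m \times d}$ in the outer-product form $R_k = x_k y_k^\top$ (using the ordinary transpose, as in the statement) with $x_k \in \mathbb{C}^m$ and $y_k \in \mathbb{C}^d$. The key observation is that the ``sandwich'' $R_k X R_k^* = (y_k^\top X \overline{y}_k)\, x_k x_k^*$ is a \emph{scalar} multiple of $x_k x_k^*$, because $y_k^\top X \overline{y}_k \in \mathbb{C}$.

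Next I would apply $T$ term by term. Since $(x_k x_k^*)^\top = \overline{x}_k x_k^\top = (\overline{x}_k y_k^\top) (\overline{x}_k y_k^\top)^* / (y_k^\top X \overline{y}_k)$ is expressible via the substituted Kraus operator, recombining the scalar factor recovers precisely
\[
(T \circ \Phi)(X) = \sum_{k \in [n]} (\overline{x}_k y_k^\top)\, X\, (\overline{x}_k y_k^\top)^*,
\]
which gives the displayed equivalence in the statement. In other words, the rule ``conjugate the left factor $x_k$ while leaving the right factor $y_k$ alone'' turns a Kraus-style expansion for $\Phi$ into one for $T \circ \Phi$.

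Finally I would check that the resolution-of-identity condition $\sum_k R_k^* R_k = I_d$ is invariant under this substitution. A short computation yields $R_k^* R_k = \|x_k\|_2^2\, \overline{y}_k y_k^\top$, which depends on $x_k$ only through its norm, and $\|\overline{x}_k\|_2 = \|x_k\|_2$. Thus $\{\overline{x}_k y_k^\top\}_{k \in [n]}$ satisfies the identity resolution whenever $\{x_k y_k^\top\}_{k \in [n]}$ does. Since the map $x_k \mapsto \overline{x}_k$ is its own inverse, this sets up a bijection between rank-one EBDs of $\Phi$ with $n$ terms and those of $T \circ \Phi$ with $n$ terms, yielding $\operatorname{ebr}(\Phi) = \operatorname{ebr}(T \circ \Phi)$.

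There is no genuine obstacle here; the only subtle point is the deliberate choice to write rank-one matrices as $x_k y_k^\top$ rather than $x_k y_k^*$. This is precisely what makes conjugating the left factor affect only the ``$x$-side'' of the sandwich (where the transpose lands), leaving the ``$y$-side'' (and therefore the resolution-of-identity sum) untouched.
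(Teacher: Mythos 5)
Your proof is correct and takes essentially the same approach as the paper: both transpose the sandwich $x_k y_k^\top X \overline{y}_k \overline{x}_k^\top$ term by term and exploit that the middle factor is a scalar (hence equal to its own transpose) to identify $\overline{x}_k y_k^\top$ as the Kraus operators of $T\circ\Phi$, with your explicit checks of the resolution-of-identity invariance (via $R_k^* R_k = \|x_k\|_2^2\,\overline{y}_k y_k^\top$) and the involution property being left implicit in the paper. One cosmetic fix: rather than writing $\overline{x}_k x_k^\top$ as a quotient by the scalar $y_k^\top X \overline{y}_k$ (which may vanish, and your displayed quotient omits the $X$ between the two Kraus factors), state the identity multiplicatively as $(\overline{x}_k y_k^\top)\,X\,(\overline{x}_k y_k^\top)^* = (y_k^\top X \overline{y}_k)\,\overline{x}_k x_k^\top$, which holds unconditionally.
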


\begin{proof}
The claim follows from the following manipulation:
\[
\bigg(\sum_{k\in[n]} x_ky_k^\top X\overline{y}_k\overline{x}_k^\top\bigg)^\top
=\sum_{k\in[n]} \overline{x}_k(\overline{y}_k^\top X^\top y_k)x_k^\top
=\sum_{k\in[n]} \overline{x}_k(\overline{y}_k^\top X^\top y_k)^\top x_k^\top
=\sum_{k\in[n]} \overline{x}_ky_k^\top X \overline{y}_kx_k^\top,
\]
where the second step takes the transpose of a scalar.
Indeed, we have both $(x_ky_k^\top)^*=\overline{y}_k\overline{x}_k^\top$ and $(\overline{x}_ky_k^\top)^*=\overline{y}_kx_k^\top$.
\end{proof}

Both of the following lemmas were implicitly used in the proof of Corollary~III.7 in~\cite{PandeyPPR:20}.
To prove them, we will repeatedly use the following identity, which is valid for any linear $\Phi\colon\mathbb{C}^{d\times d}\to\mathbb{C}^{m\times m}$, and any $w,y\in\mathbb{C}^d$ and $x,z\in\mathbb{C}^m$:
\begin{align}
(w\otimes x)^\top C_\Phi (y\otimes z)
\nonumber
&=\sum_{i\in[d]}\sum_{j\in[d]} w_iy_j\cdot x^\top \Phi(e_ie_j^*) z\\
\label{eq.choiid}
&=x^\top \Phi\bigg(\sum_{i\in[d]}\sum_{j\in[d]} w_iy_j\cdot e_ie_j^*\bigg) z
=x^\top \Phi(wy^\top) z.
\end{align}

\begin{lemma}
\label{lem.decompvschoi}
An entanglement breaking map $\Phi$ has entanglement breaking decomposition
\[
\Phi(X)
=\sum_{k\in[n]}a_kb_k^\top X (a_kb_k^\top)^*.
\]
if and only if $\Phi$ has Choi matrix
\[
C_\Phi
=\sum_{k\in[n]}(b_k\otimes a_k)(b_k\otimes a_k)^*.
\]
\end{lemma}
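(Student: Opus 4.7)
The plan is to leverage identity \eqref{eq.choiid}, which asserts $x^\top \Phi(wy^\top)z = (w\otimes x)^\top C_\Phi(y\otimes z)$ for every $w,y\in\mathbb{C}^d$ and $x,z\in\mathbb{C}^m$. Since the outer products $wy^\top$ span $\mathbb{C}^{d\times d}$, this identity shows that each of $\Phi$ and $C_\Phi$ determines the other. So it suffices to check that the two proposed decompositions produce the same value of the bilinear pairing $(w,x,y,z)\mapsto x^\top\Phi(wy^\top)z$.

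For the ``only if'' direction, I would assume $\Phi(X)=\sum_{k\in[n]}a_kb_k^\top X(a_kb_k^\top)^*$. Using $(a_kb_k^\top)^*=\overline{b}_k\overline{a}_k^\top$ and treating the two middle factors $b_k^\top w$ and $y^\top\overline{b}_k$ as scalars, the left side of \eqref{eq.choiid} expands as
\[
x^\top \Phi(wy^\top)z = \sum_{k\in[n]}(x^\top a_k)(b_k^\top w)(y^\top\overline{b}_k)(\overline{a}_k^\top z).
\]
On the other hand, the tensor identities $(u\otimes v)^\top(p\otimes q)=(u^\top p)(v^\top q)$ and $(u\otimes v)^*(p\otimes q)=(\overline{u}^\top p)(\overline{v}^\top q)$ give
\[
(w\otimes x)^\top\Biggl(\sum_{k\in[n]}(b_k\otimes a_k)(b_k\otimes a_k)^*\Biggr)(y\otimes z) = \sum_{k\in[n]}(w^\top b_k)(x^\top a_k)(\overline{b}_k^\top y)(\overline{a}_k^\top z),
\]
and the two sums agree after the scalar-transpose moves $w^\top b_k = b_k^\top w$ and $\overline{b}_k^\top y = y^\top\overline{b}_k$. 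Invoking \eqref{eq.choiid} once more identifies $C_\Phi$ with the claimed sum of rank-one tensors.

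For the ``if'' direction, I would introduce the auxiliary map $\Psi(X):=\sum_{k\in[n]}a_kb_k^\top X(a_kb_k^\top)^*$. The ``only if'' direction, applied to $\Psi$, gives $C_\Psi=\sum_{k\in[n]}(b_k\otimes a_k)(b_k\otimes a_k)^*=C_\Phi$. Since the Choi correspondence $\Phi\mapsto C_\Phi$ is a linear bijection between linear maps $\mathbb{C}^{d\times d}\to\mathbb{C}^{m\times m}$ and matrices in $\mathbb{C}^{d\times d}\otimes\mathbb{C}^{m\times m}$, we conclude $\Phi=\Psi$, which is the desired entanglement breaking decomposition.

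I do not anticipate a substantive obstacle: the whole proof is a matter of carefully unwinding conjugates and transposes in tensor notation, echoing the manipulations already performed in Lemma~\ref{lem.ebrofTofZ}. The one spot requiring care is tracking the difference between $\overline{b}_k^\top y$ and $y^\top\overline{b}_k$, which are equal only because they are scalars; making that observation explicit forestalls any sign or conjugation confusion.
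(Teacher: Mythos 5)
Your proposal is correct and matches the paper's proof in essence: both directions rest on the identity \eqref{eq.choiid} together with the same expansion of $x^\top\Phi(wy^\top)z$ into scalar factors, concluding because simple tensors $w\otimes x$ span the tensor product. The only cosmetic difference is that you dispatch the converse by citing injectivity of the Choi correspondence $\Phi\mapsto C_\Phi$ (applied to an auxiliary map $\Psi$), whereas the paper simply runs the same computation in reverse; the two are logically equivalent here.
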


\begin{proof}
($\Rightarrow$)
For any $w,x,y,z$, we may apply \eqref{eq.choiid} to get
\begin{align*}
(w\otimes x)^\top C_\Phi (y\otimes z)
&=x^\top \bigg(\sum_{k\in[n]}a_kb_k^\top (wy^\top) \overline{b}_k\overline{a}_k^\top\bigg) z\\
&=\sum_{k\in[n]}(x^\top a_kb_k^\top w)(y^\top \overline{b}_k\overline{a}_k^\top z)\\
&=\sum_{k\in[n]}(w\otimes x)^\top(b_k\otimes a_k)(\overline{b}_k\otimes\overline{a}_k)^\top(y\otimes z)\\
&=(w\otimes x)^\top\bigg(\sum_{k\in[n]}(b_k\otimes a_k)(b_k\otimes a_k)^*\bigg)(y\otimes z).
\end{align*}
Since $w,x,y,z$ are arbitrary, the result follows.

($\Leftarrow$)
For any $w,x,y,z$, we may similarly apply \eqref{eq.choiid} to get
\begin{align*}
x^\top\Phi(wy^\top)z
&=(w\otimes x)^\top C_\Phi (y\otimes z)\\
&=(w\otimes x)^\top\bigg(\sum_{k\in[n]}(b_k\otimes a_k)(b_k\otimes a_k)^*\bigg)(y\otimes z)\\
&=x^\top \bigg(\sum_{k\in[n]}a_kb_k^\top (wy^\top) \overline{b}_k\overline{a}_k^\top\bigg) z.
\end{align*}
Since $w,x,y,z$ are arbitrary, the result follows.
\end{proof}

\begin{lemma}
\label{lem.choiofTofZ}
$C_{T\circ \mathfrak{Z}_d}=\frac{2}{d+1}\Pi_d^{(2)}$.
\end{lemma}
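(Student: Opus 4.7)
The plan is a direct computation from the definition of the Choi matrix, culminating in the recognition of a familiar operator. I would begin by evaluating $T\circ\mathfrak{Z}_d$ on matrix units: since $(e_ie_j^*)^\top = e_je_i^*$ and $\operatorname{tr}(e_ie_j^*) = \delta_{ij}$, the definition of $\mathfrak{Z}_d$ gives
\[
(T\circ\mathfrak{Z}_d)(e_ie_j^*) = \tfrac{1}{d+1}\bigl(e_je_i^* + \delta_{ij} I_d\bigr).
\]
Substituting this into the defining double sum for $C_{T\circ\mathfrak{Z}_d}$ and separating the two resulting terms yields
\[
C_{T\circ\mathfrak{Z}_d} = \tfrac{1}{d+1}\biggl(\sum_{i,j\in[d]} e_ie_j^* \otimes e_je_i^* \; + \; I_d\otimes I_d\biggr),
\]
where the second summand comes from $\sum_i e_ie_i^*\otimes I_d = I_d\otimes I_d$.

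The key step is to identify $\sum_{i,j} e_ie_j^* \otimes e_je_i^*$ with the swap (flip) operator $F$ on $\mathbb{C}^d\otimes\mathbb{C}^d$, determined by $F(u\otimes v) = v\otimes u$; a one-line check of its action on the basis vectors $e_k\otimes e_\ell$ confirms the identification. Combining this with the standard identity $\Pi_d^{(2)} = \tfrac{1}{2}(I_d\otimes I_d + F)$ for orthogonal projection onto the symmetric subspace of $\mathbb{C}^d\otimes\mathbb{C}^d$, one immediately obtains
\[
C_{T\circ\mathfrak{Z}_d} = \tfrac{1}{d+1}(I_d\otimes I_d + F) = \tfrac{2}{d+1}\Pi_d^{(2)}.
\]

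I do not anticipate a real obstacle here: the proof is essentially a definition-chase, and the only conceptual ingredient is the recognition of the swap operator, which is completely standard in quantum information theory. The only minor care needed is getting the transposition right when combining $T$ with the channel $\mathfrak{Z}_d$ at the level of matrix units.
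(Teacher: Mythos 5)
Your proof is correct, and it reaches the same identity by a genuinely different route than the paper. You expand the Choi matrix directly in matrix units: compute $(T\circ\mathfrak{Z}_d)(e_ie_j^*)$, split off the $\delta_{ij}$ term as $I_d\otimes I_d$, recognize $\sum_{i,j\in[d]}e_ie_j^*\otimes e_je_i^*$ as the swap operator $F$, and finish with the standard symmetrizer identity $\Pi_d^{(2)}=\tfrac{1}{2}(I_d\otimes I_d+F)$. The paper never introduces $F$; instead it evaluates the bilinear form $(w\otimes x)^\top C_{T\circ\mathfrak{Z}_d}(y\otimes z)$ using the identity \eqref{eq.choiid} established just before, obtains
\[
C_{T\circ\mathfrak{Z}_d}(y\otimes z)=\tfrac{1}{d+1}(y\otimes z+z\otimes y)=\tfrac{2}{d+1}\Pi_d^{(2)}(y\otimes z)
\]
on product vectors, and concludes by density of the span of such vectors. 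The trade-off: your matrix-unit expansion is the more standard quantum-information computation and makes the structure of $C_{T\circ\mathfrak{Z}_d}$ as a combination of $I_d\otimes I_d$ and $F$ completely explicit (indeed, the paper's own definition of $\Pi_d^{(2)}$ in the finite field section is exactly $\tfrac{1}{2}(I_d\otimes I_d+F)$ written out in matrix units, so your key identity is implicit there); the paper's route buys uniformity, since the same identity \eqref{eq.choiid} also drives the proof of Lemma~\ref{lem.decompvschoi}, and it sidesteps the need to separately verify the swap-operator formula for the projection, deriving the action of $\Pi_d^{(2)}$ on $y\otimes z$ directly from symmetrization. The one step you correctly flagged as needing care, $(e_ie_j^*)^\top=e_je_i^*$, is handled properly, so there is no gap.
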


\begin{proof}
For any $w,x,y,z$, we may apply \eqref{eq.choiid} to get
\begin{align*}
(w\otimes x)^\top C_{T\circ \mathfrak{Z}_d}(y\otimes z)
&=x^\top \Big[(T\circ\mathfrak{Z}_d)(wy^\top)\Big] z\\
&=x^\top\frac{1}{d+1}\Big(yw^\top +\operatorname{tr}wy^\top\cdot I_d\Big)z\\
&=\frac{1}{d+1}\Big(w^\top z\cdot x^\top y + w^\top y \cdot x^\top z\Big)
=(w\otimes x)^\top\frac{1}{d+1}\Big(z\otimes y+y\otimes z\Big).
\end{align*}
Since $w,x$ are arbitrary, it follows that
\[
C_{T\circ \mathfrak{Z}_d}(y\otimes z)
=\frac{2}{d+1}\cdot\frac{1}{2}(z\otimes y+y\otimes z)
=\frac{2}{d+1}\Pi_d^{(2)}(y\otimes z).
\]
Since $y,z$ are arbitrary, the result follows.
\end{proof}

We are now ready to prove the main result of this section.

\begin{proof}[Proof of Theorem~\ref{thm.weighted2designtoebr}]
Suppose $\{x_k\}_{k\in[n]}$ is a weighted projective $2$-design for $\mathbb{C}^{d}$ with weights $\{w_k\}_{k\in[n]}$.
We claim that $\operatorname{ebr}(\mathfrak{Z}_d)\leq n$.
To see this, first recall that
\[
\sum_{k\in[n]}w_k(x_k\otimes x_k)(x_k\otimes x_k)^*=\frac{2}{d(d+1)}\cdot\Pi_d^{(2)}.
\]
Taking $a_k:=x_k$ and $b_k:=\sqrt{dw_k}x_k$ then gives
\[
\sum_{k\in[n]}(b_k\otimes a_k)(b_k\otimes a_k)^*
=d\sum_{k\in[n]}w_k(x_k\otimes x_k)(x_k\otimes x_k)^*
=\frac{2}{d+1}\cdot\Pi_d^{(2)}
=C_{T\circ\mathfrak{Z}_d},
\]
where the last step applies Lemma~\ref{lem.choiofTofZ}.
Lemmas~\ref{lem.ebrofTofZ} and~\ref{lem.decompvschoi} then imply that
\[
\operatorname{ebr}(\mathfrak{Z}_d)
=\operatorname{ebr}(T\circ\mathfrak{Z}_d)
\leq n.
\]
Next, suppose $n=\operatorname{ebr}(\mathfrak{Z}_d)=\operatorname{ebr}(T\circ\mathfrak{Z}_d)$ and consider entanglement breaking decomposition
\[
(T\circ\Phi)(X)
=\sum_{k\in[n]}a_kb_k^\top X (a_kb_k^\top)^*.
\]
Then Lemmas~\ref{lem.decompvschoi} and~\ref{lem.choiofTofZ} together imply
\[
\sum_{k\in[n]}(b_k\otimes a_k)(b_k\otimes a_k)^*
=\frac{2}{d+1}\cdot\Pi_d^{(2)}.
\]
Notice that for every antisymmetric $x\in(\mathbb{C}^d)^{\otimes 2}$, it holds that
\[
\sum_{k\in[n]}|\langle b_k\otimes a_k,x\rangle|^2
=x^*\bigg(\sum_{k\in[n]}(b_k\otimes a_k)(b_k\otimes a_k)^*\bigg)x
=x^*\frac{2}{d+1}\Pi_d^{(2)}x
=0.
\]
It follows that each $b_k\otimes a_k$ is necessarily symmetric, and therefore takes the form $\sqrt{dw_k}x_k\otimes x_k$ for some unit vector $x_k$ and scalar $w_k\geq0$.
Then
\[
\sum_{k\in[n]}w_k(x_k\otimes x_k)(x_k\otimes x_k)^*=\frac{2}{d(d+1)}\cdot\Pi_d^{(2)}.
\]
The fact that $\sum_{k\in[n]}w_k=1$ follows from taking the trace of both sides.
Overall, there exists a weighted projective $2$-design for $\mathbb{C}^{d}$ of size $n=\operatorname{ebr}(\mathfrak{Z}_d)$, as claimed.
\end{proof}

\section{The finite field setting}

In this section, we introduce a notion of projective $2$-designs in a finite field setting.
Here, we will find an analog to Proposition~\ref{prop.d^2 threshold} in which tight projective $2$-designs are identified as maximal systems of equiangular lines, and then we will provide several examples.
We start by reviewing some preliminaries; the reader is encouraged to see~\cite{GreavesIJM:20} for more information.

Let $q$ be a prime power.
Given $a \in \mathbb{F}_{q^2}$, we abbreviate $\overline{a} = a^q$ for its image under the Frobenius automorphism fixing $\mathbb{F}_q \leq \mathbb{F}_{q^2}$.
The conjugate transpose of a matrix $A$ is denoted by $A^*$.
We consider $\mathbb{F}_{q^2}^d$ under the nondegenerate Hermitian form $\langle x , y \rangle = x^* y$, which is notably conjugate-linear in the first variable.
A subspace $V \leq \mathbb{F}_{q^2}^d$ is called \textbf{nondegenerate} if $V \cap V^\perp = \{ 0 \}$, where 
\[ V^\perp := \{ x \in \mathbb{F}_{q^2}^d : \langle x, y \rangle = 0 \text{ for every }y \in V \}. \]
In that case, every $x \in \mathbb{F}_{q^2}^d$ can be written uniquely as $x = Px + Qx$ with $Px \in V$ and $Qx \in V^\perp$, where $P \colon \mathbb{F}_{q^2}^d \to \mathbb{F}_{q^2}^d$ is \textbf{orthogonal projection} onto $V$.

\begin{definition}
Let $V \leq \mathbb{F}_{q^2}^d$ be nondegenerate.
We say $\{ x_k \}_{k \in [n]}$ in $V$ is a $c$-\textbf{tight frame} for $V$ with constant $c\in \mathbb{F}_q$ if
\begin{itemize}
\item[(i)]
$\operatorname{span} \{x_k\}_{k \in [n]} = V$, and
\item[(ii)]
$\sum_{k \in [n]} \langle x_k, y \rangle x_k = cy$ for every $y \in V$.
\end{itemize}
For $a \in \mathbb{F}_q$, a $c$-tight frame is an \textbf{equal-norm tight frame}, or $(a,c)$-NTF, if
\begin{itemize}
\item[(iii)]
$\langle x_k , x_k \rangle = a$ for every $k \in [n]$.
\end{itemize}
For $b \in \mathbb{F}_q$, an $(a,c)$-NTF is an \textbf{equiangular tight frame}, or $(a,b,c)$-ETF, if
\begin{itemize}
\item[(iv)]
$\langle x_k, x_\ell \rangle \langle x_\ell, x_k \rangle = b$ for every $k,\ell\in[n]$ with $k \neq \ell$.
\end{itemize}
Meanwhile, an $(a,b)$-\textbf{equiangular system} in $V$ satisfies (iii) and (iv), but not necessarily (i) or (ii).
\end{definition}

Notice that (ii) implies (i) if $c \neq 0$.
Furthermore, if $P$ is orthogonal projection onto $V$, then (ii) is equivalent to
\begin{itemize}
\item[(ii$'$)]
$\sum_{k \in [n]} x_k x_k^* = cP$.
\end{itemize}

We will repeatedly make use of the following basic results from~\cite{GreavesIJM:20}:

\begin{proposition}[Corollary~3.8 in~\cite{GreavesIJM:20}]
\label{prop.vanishing frame bound redundancy}
If $V \leq \mathbb{F}_{q^2}^d$ is nondegenerate and $\{x_k \}_{k\in [n]}$ is a tight frame for $V$ with constant $c = 0$, then $n \geq 2\dim V$.
\end{proposition}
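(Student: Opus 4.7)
The plan is to encode the vanishing-constant hypothesis as a single matrix identity and then extract the bound via rank-nullity. I would collect the frame vectors into the synthesis operator $F = [\,x_1 \mid \cdots \mid x_n\,] \in \mathbb{F}_{q^2}^{d \times n}$. Condition (i) gives $\operatorname{rank} F = \dim V$, while condition (ii$'$) with $c = 0$ reads exactly $FF^* = 0$.

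From $FF^* = 0$ one has $\operatorname{range}(F^*) \subseteq \ker F$. Taking dimensions and using that the conjugate transpose preserves rank (transposition is rank-preserving, and entrywise Frobenius is a field automorphism, hence preserves linear independence), I get
\[
\dim V \;=\; \operatorname{rank} F^* \;\leq\; \dim \ker F \;=\; n - \operatorname{rank} F \;=\; n - \dim V,
\]
which rearranges to $n \geq 2\dim V$.

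The only step that deserves mild care is promoting condition (ii), which is stated only for $y \in V$, to the genuinely matrix identity $FF^* = 0$ on all of $\mathbb{F}_{q^2}^d$. Since each $x_k$ lies in $V$, the map $y \mapsto \sum_k \langle x_k, y\rangle x_k = FF^* y$ automatically annihilates $V^\perp$; combined with the $c=0$ hypothesis on $V$ and the direct sum decomposition $\mathbb{F}_{q^2}^d = V \oplus V^\perp$ afforded by nondegeneracy of $V$, this yields $FF^* = 0$ as matrices. After that the argument is pure linear algebra valid over any field, and there is no real obstacle --- which matches the intuition that the $c=0$ phenomenon, impossible over $\mathbb{R}$ or $\mathbb{C}$ by positivity of $FF^*$, is forced into the sharp dimension inequality $n \geq 2\dim V$ whenever it occurs in the finite-field setting.
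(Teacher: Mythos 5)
Your proof is correct, and there is nothing in this paper to compare it against: the proposition is imported verbatim from Corollary~3.8 of \cite{GreavesIJM:20}, with no proof reproduced here, so your argument serves as a self-contained justification. The two delicate points both check out. First, the promotion of condition (ii) to the matrix identity $FF^*=0$: for $y\in V^\perp$, the vector $F^*y$ has entries $\langle x_k,y\rangle$, and these vanish because the form is Hermitian-symmetric --- $\overline{\langle y,x\rangle}=\langle x,y\rangle$, since $a\mapsto a^q$ is an involutory automorphism of $\mathbb{F}_{q^2}$ --- so $F^*$ annihilates $V^\perp$; combined with the decomposition $\mathbb{F}_{q^2}^d=V\oplus V^\perp$, which the paper records as the consequence of nondegeneracy, this yields $FF^*=0$ on all of $\mathbb{F}_{q^2}^d$. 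Second, $\operatorname{rank}F^*=\operatorname{rank}F$ holds as you say; one clean way to see it is that the entrywise Frobenius map commutes with every minor, determinants being integer polynomials in the entries. With these secured, $\operatorname{range}(F^*)\subseteq\ker F$ and rank--nullity give $\dim V\leq n-\dim V$ as claimed. One remark worth making: your choice to work with $FF^*$ rather than the Gram matrix $G=F^*F$ is the right one, not merely cosmetic. The tempting alternative --- $G^2=F^*(FF^*)F=0$, hence $\operatorname{rank}G\leq n/2$ --- does not finish the proof over finite fields, because $\operatorname{rank}(F^*F)$ can be strictly smaller than $\operatorname{rank}F$ (isotropic vectors can make the Gram matrix degenerate even when the $x_k$ span $V$), so one cannot pass from a bound on $\operatorname{rank}G$ to a bound on $\dim V$. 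Your containment $\operatorname{range}(F^*)\subseteq\ker F$ sidesteps this obstruction entirely.
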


\begin{proposition}[Equation~(3.2) and Proposition~4.7 in~\cite{GreavesIJM:20}]\
\label{prop.parameter relations}
\begin{itemize}
\item[(a)]
If $\{x_k \}_{k\in [n]}$ is an $(a,c)$-NTF for $V$, then $na = c \dim V$.
\item[(b)]
If $\{x_k \}_{k\in [n]}$ is an $(a,b,c)$-ETF for $V$, then $a(c-a) = (n-1) b$.
\end{itemize}
\end{proposition}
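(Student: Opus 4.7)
For part~(a), the plan is a straightforward trace computation. I would rewrite condition~(ii) in the operator form~(ii$'$), so that $\sum_{k \in [n]} x_k x_k^* = cP$ where $P$ is orthogonal projection onto $V$. Taking traces on both sides, the left-hand side reduces to $\sum_k \langle x_k, x_k\rangle = na$ by condition~(iii), and the right-hand side reduces to $c \cdot \operatorname{tr}(P) = c\dim V$ since the trace of a projection equals its rank. The one technical point to check is that this trace identity still holds in characteristic $p$; the cleanest route is to exhibit an orthonormal basis for $V$, which exists because every nondegenerate Hermitian form on $\mathbb{F}_{q^2}^d$ admits one. Comparing the two sides then yields $na = c\dim V$.

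For part~(b), rather than chase traces of squares, I would pair the tight frame identity with a single frame vector. Fixing $j \in [n]$ and applying condition~(ii) with $y = x_j \in V$ gives
\[
\sum_{k \in [n]} \langle x_k, x_j \rangle x_k = c x_j.
\]
Taking the inner product of both sides with $x_j$ yields $\sum_{k \in [n]} \langle x_k, x_j\rangle \langle x_j, x_k\rangle = c\langle x_j, x_j\rangle = ca$. On the left, the $k = j$ summand is $a^2$ by~(iii), while each of the remaining $n-1$ summands equals $b$ by~(iv). The identity $a^2 + (n-1)b = ca$ then rearranges into $a(c-a) = (n-1)b$.

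The main obstacle in the finite field setting is a characteristic-$p$ hazard: one must avoid arguments that silently divide by an integer which happens to vanish modulo~$q$. For instance, the more symmetric-looking route to~(b) — computing $\operatorname{tr}\bigl((\sum_k x_k x_k^*)^2\bigr)$ in two ways, via $(cP)^2 = c^2 P$ and via expansion — would produce the identity $na^2 + n(n-1)b = nac$, from which one would like to cancel a factor of $n$; this is invalid precisely when $q \mid n$. The pointwise pairing above avoids this issue entirely, and so is the preferable route.
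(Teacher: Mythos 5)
Your proof is correct, but note that there is no internal proof to compare it against: the paper states Proposition~\ref{prop.parameter relations} as an import from~\cite{GreavesIJM:20} (Equation~(3.2) there for part~(a), Proposition~4.7 for part~(b)) and never proves it. Your two arguments are the natural ones for these identities. For~(a), taking the trace of the operator form (ii$'$) is exactly right, and the one point you flag --- that $\operatorname{tr}(P)$ equals $\dim V \cdot 1_{\mathbb{F}_q}$, i.e.\ $\dim V$ reduced mod $p$ --- can be settled even more cheaply than by constructing an orthonormal basis: $P$ is idempotent, its minimal polynomial divides $x^2 - x$, which has distinct roots in any characteristic, so $P$ is diagonalizable with eigenvalues $0$ and $1$ and $\operatorname{tr}(P) = \operatorname{rank}(P)\cdot 1_{\mathbb{F}_q}$. (Your orthonormal-basis route also works, since any nondegenerate Hermitian space over $\mathbb{F}_{q^2}$ admits one; just be aware both sides of $na = c\dim V$ are then equalities in $\mathbb{F}_q$ with $n$ and $\dim V$ reduced mod $p$, which is how the paper uses the identity, e.g.\ in deducing $a^2 = c_1^2$ from $na = dc_1$ in the proof of Theorem~\ref{thm.characterization of tight designs}.) For~(b), pairing the tight-frame identity at $y = x_j$ with $x_j$ itself is clean and handles the conjugate-linearity correctly: the coefficients $\langle x_k, x_j\rangle$ sit in the linear slot, producing $\sum_k \langle x_k, x_j\rangle\langle x_j, x_k\rangle = ca$, and splitting off the $k = j$ term gives $a^2 + (n-1)b = ca$ as you say. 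Your closing caution is also well placed: the double-trace route yields only $n\bigl(a^2 + (n-1)b\bigr) = nac$, and cancelling $n$ is illegitimate when $p \mid n$, a situation the finite field setting genuinely permits; your pointwise argument sidesteps it, and is almost certainly the computation carried out in the cited reference.
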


\begin{proposition}[Gerzon's bound, see Theorem~4.1 in~\cite{GreavesIJM:20} and its proof]
\label{prop.gerzon}
If $\{x_k \}_{k \in [n]}$ is an $(a,b)$-equiangular system in $\mathbb{F}_{q^2}^d$ and $a^2 \neq b$, then $n \leq d^2$.
If equality holds and $a \neq 0$, then $\{ x_kx_k^* \}_{k\in [n]}$ is a basis for the $\mathbb{F}_q$-linear space
\[
\{ X \in \mathbb{F}_{q^2}^{d\times d} : X = X^* \}.
\]
If equality holds and $a = 0$, then the $\mathbb{F}_q$-span of $\{ x_k x_k^* \}_{ k \in [n]}$ is the subspace
\[ 
\{ X \in \mathbb{F}_{q^2}^{d\times d} : X = X^*,~\operatorname{tr} X = 0 \}, 
\]
and $\sum_{k \in [n]} x_k x_k^* = 0$ is the unique $\mathbb{F}_q$-linear dependency of $\{ x_k x_k^* \}_{k \in [n]}$ up to a scalar.
\end{proposition}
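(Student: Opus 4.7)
The plan is to study the Hermitian outer products $M_k := x_k x_k^*$ inside the $d^2$-dimensional $\mathbb{F}_q$-vector space $H_d := \{X \in \mathbb{F}_{q^2}^{d \times d} : X = X^*\}$, equipped with the symmetric $\mathbb{F}_q$-bilinear trace form $(X, Y) \mapsto \operatorname{tr}(XY)$ (which takes values in $\mathbb{F}_q$ since Hermitian matrices have Frobenius-fixed trace). The first step is to compute the Gram matrix $G$ of $\{M_k\}_{k \in [n]}$: cyclicity of the trace gives $\operatorname{tr}(M_k M_\ell) = \langle x_k, x_\ell \rangle \langle x_\ell, x_k \rangle$, which equals $a^2$ when $k = \ell$ and $b$ otherwise, so $G = (a^2 - b)\, I_n + b\, \mathbf{1}\mathbf{1}^\top$.

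Next I will analyze the kernel of $G$. Using $a^2 \neq b$, any $c \in \ker G$ must be a scalar multiple of $\mathbf{1}$, and such a multiple belongs to $\ker G$ precisely when $a^2 + (n-1) b = 0$. Any $\mathbb{F}_q$-linear dependency $\sum_k c_k M_k = 0$ yields $c \in \ker G$ by pairing with each $M_\ell$ under the trace form, so the only possible dependency among the $M_k$ is $\sum_k M_k = 0$, and it occurs only when $a^2 + (n-1)b = 0$. Taking the trace of this forced dependency gives $na = 0$ in $\mathbb{F}_q$.

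From here the argument splits on $a$. When $a \neq 0$, $na = 0$ forces $p \mid n$ where $p = \operatorname{char} \mathbb{F}_q$; but then $a^2 + (n-1)b \equiv a^2 - b \not\equiv 0 \pmod p$, contradicting the kernel condition. Hence no dependency exists, $\{M_k\}_{k \in [n]}$ is $\mathbb{F}_q$-linearly independent in $H_d$, and $n \leq d^2$, with equality forcing a basis. When $a = 0$, the hypothesis $a^2 \neq b$ gives $b \neq 0$, and each $M_k$ lies in the trace-zero subspace $H_d^0 \leq H_d$ of dimension $d^2 - 1$; if no dependency occurs then $n \leq d^2 - 1$, and otherwise the unique-up-to-scalar dependency $\sum_k M_k = 0$ reduces the span's dimension by one, yielding $n - 1 \leq d^2 - 1$ and $\operatorname{span}_{\mathbb{F}_q}\{M_k\} = H_d^0$ at equality.

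The main obstacle I anticipate is the positive-characteristic bookkeeping. Over $\mathbb{R}$ or $\mathbb{C}$ the determinant $(a^2-b)^{n-1}(a^2+(n-1)b)$ of $G$ has a transparent sign, but over $\mathbb{F}_q$ I must instead carefully exploit the interaction between $na = 0$ and $a^2 + (n-1)b = 0$ to rule out the singular case when $a \neq 0$, and to extract $n \leq d^2$ (rather than $n \leq d^2 + 1$) when $a = 0$. The remaining ingredients — the trace-form computation, the Gram-matrix structure, and dimension counting — should be routine.
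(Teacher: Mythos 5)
Your argument is correct and is essentially the approach behind the proof this paper cites (Theorem~4.1 in~\cite{GreavesIJM:20}): place the outer products $x_kx_k^*$ in the $d^2$-dimensional $\mathbb{F}_q$-space of Hermitian matrices, compute the Gram matrix $(a^2-b)I_n+b\mathbf{1}\mathbf{1}^\top$ under the trace form, and run the kernel/dependency analysis split on $a\neq 0$ versus $a=0$, including the $na=0$ trace trick that rules out a dependency when $a\neq 0$. One cosmetic note: your parenthetical reason that $\operatorname{tr}(XY)\in\mathbb{F}_q$ is off as stated, since $XY$ need not be Hermitian; the correct one-liner is $\overline{\operatorname{tr}(XY)}=\operatorname{tr}((XY)^*)=\operatorname{tr}(YX)=\operatorname{tr}(XY)$.
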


\subsection{Projective 2-designs}

Throughout this subsection, we assume $q$ is odd.
Let $e_1,\dotsc,e_d \in \mathbb{F}_{q^2}^d$ denote the standard basis.
Then $\{ e_i \otimes e_j \}_{i,j \in [d]}$ is a basis for $(\mathbb{F}_{q^2}^d)^{\otimes 2}$.
We write 
\[ 
(\mathbb{F}_{q^2}^d)^{\otimes 2}_{\operatorname{sym}} 
:= \bigg\{ \sum_{i \in [d]} \sum_{j \in [d]} c_{ij} (e_i \otimes e_j) : c_{ij} = c_{ji} \text{ for every } i,j \in [d] \bigg\} 
\]
for the subspace of symmetric tensors, and we define
\[
\Pi_d^{(2)} 
:= \frac{1}{2} \sum_{i \in [d]} \sum_{j\in [d]} \Big(e_i e_i^* \otimes e_j e_j^* + e_i e_j^* \otimes e_j e_i^*\Big) 
\in (\mathbb{F}_{q^2}^{d \times d})^{\otimes 2}.
\]

\begin{lemma}
\label{lem.symmetric tensors are nondegenerate}
$(\mathbb{F}_{q^2}^d)^{\otimes 2}_{\operatorname{sym}} \leq (\mathbb{F}_{q^2}^{d})^{\otimes2}$ is nondegenerate, and $\Pi_d^{(2)}$ is its orthogonal projection.
\end{lemma}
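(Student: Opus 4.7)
The plan is to obtain an explicit formula for $\Pi_d^{(2)}$ as an operator on $(\mathbb{F}_{q^2}^d)^{\otimes 2}$, which will immediately reveal it as the usual symmetrization $y \otimes z \mapsto \tfrac{1}{2}(y \otimes z + z \otimes y)$. The calculation is straightforward: using $e_i^* y = y_i$ (valid because $e_i$ has entries in $\mathbb{F}_q$) and the fact that $\tfrac{1}{2} \in \mathbb{F}_q$ since $q$ is odd, the summand $e_i e_i^* \otimes e_j e_j^*$ applied to $y \otimes z$ contributes $y_i z_j\,(e_i \otimes e_j)$, which sums to $y \otimes z$; while $e_i e_j^* \otimes e_j e_i^*$ contributes $y_j z_i\,(e_i \otimes e_j)$, which sums to $z \otimes y$.

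From this formula I would immediately read off that the image of $\Pi_d^{(2)}$ is contained in $V := (\mathbb{F}_{q^2}^d)^{\otimes 2}_{\operatorname{sym}}$ and that $\Pi_d^{(2)}$ restricts to the identity on $V$. Setting $A := \operatorname{image}(I - \Pi_d^{(2)})$, which is spanned by antisymmetric tensors $y \otimes z - z \otimes y$, the key step is then a direct bilinear expansion verifying that any such antisymmetric tensor is orthogonal to both $s_1 \otimes s_2 + s_2 \otimes s_1$ and $s \otimes s$ under the tensor-product Hermitian form $\langle a \otimes b,\, c \otimes d \rangle = \langle a, c \rangle \langle b, d \rangle$; in each case the four resulting scalar products cancel in pairs. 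This gives $A \subseteq V^\perp$.

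To finish, every $v \in (\mathbb{F}_{q^2}^d)^{\otimes 2}$ decomposes as $v = \Pi_d^{(2)} v + (I - \Pi_d^{(2)}) v$ with the first summand in $V$ and the second in $A \subseteq V^\perp$, so $V + V^\perp$ is the whole ambient space. Nondegeneracy of the ambient Hermitian form then forces $V \cap V^\perp = 0$, yielding nondegeneracy of $V$ along with the direct sum decomposition $(\mathbb{F}_{q^2}^d)^{\otimes 2} = V \oplus V^\perp$; the uniqueness of this decomposition identifies $\Pi_d^{(2)}$ as orthogonal projection onto $V$ in the sense defined at the start of the section. I do not foresee any serious obstacle beyond the bookkeeping in the orthogonality check; the only place the assumption that $q$ is odd is used is to invert $2$.
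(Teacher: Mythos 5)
Your proposal is correct, and its computational core overlaps with the paper's proof: the paper likewise applies $\Pi_d^{(2)}$ to a general tensor $x=\sum_{k,\ell}c_{k\ell}(e_k\otimes e_\ell)$ to obtain the symmetrization formula $\Pi_d^{(2)}x=\frac{1}{2}\sum_{k,\ell}c_{k\ell}(e_k\otimes e_\ell+e_\ell\otimes e_k)$, and then checks that $x-\Pi_d^{(2)}x$ is orthogonal to every symmetric tensor, exactly your decomposition $v=\Pi_d^{(2)}v+(I-\Pi_d^{(2)})v$. Where you genuinely diverge is the nondegeneracy claim. The paper proves it first and directly: given a nonzero symmetric $y=\sum_{i,j}a_{ij}(e_i\otimes e_j)$ with $a_{k\ell}\neq 0$, it pairs $y$ against the symmetric test vector $e_k\otimes e_\ell+e_\ell\otimes e_k$ and obtains $2\overline{a_{k\ell}}\neq 0$ (using $q$ odd), so no nonzero element of $V:=(\mathbb{F}_{q^2}^d)^{\otimes 2}_{\operatorname{sym}}$ lies in $V^\perp$. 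You instead deduce nondegeneracy a posteriori from the fact that $V+V^\perp$ is the whole space; this works, but the step ``ambient nondegeneracy forces $V\cap V^\perp=0$'' deserves its one-line justification, which you omit: if $x\in V\cap V^\perp$, then $x$ is orthogonal to all of $V$ (as $x\in V^\perp$) and to all of $V^\perp$ (as $x\in V$, using that orthogonality is symmetric for a Hermitian form since $\langle y,x\rangle=\overline{\langle x,y\rangle}$), hence orthogonal to everything, hence zero. This matters because over $\mathbb{F}_{q^2}$ one cannot fall back on positivity as one would over $\mathbb{C}$, and the paper's definition of orthogonal projection presupposes nondegeneracy, which is why the paper establishes that property before discussing the projection; your argument legitimately sidesteps the ordering by producing the unique decomposition directly. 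On balance, your route buys economy (one computation yields the formula, the decomposition, and nondegeneracy simultaneously), while the paper's direct pairing argument is self-contained and makes visible precisely where the hypothesis that $q$ is odd prevents degeneracy.
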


\begin{proof}
For nondegeneracy, let $y = \sum_{i \in [d]}\sum_{j \in [d]} a_{ij} (e_i \otimes e_j) \in (\mathbb{F}_{q^2}^d)^{\otimes 2}_{\operatorname{sym}}$ be nonzero.
Then there exist $k,\ell \in [d]$ such that $a_{k\ell} = a_{\ell k} \neq 0$.
It follows that
\begin{align*}
\langle y, e_k \otimes e_\ell + e_\ell \otimes e_k \rangle &=
\sum_{i \in [d]} \sum_{j \in [d]} \overline{a_{ij}} \langle e_i \otimes e_j, e_k \otimes e_\ell \rangle
+
\sum_{i \in [d]} \sum_{j \in [d]} \overline{a_{ij}} \langle e_i \otimes e_j, e_\ell \otimes e_k \rangle
= 2\overline{a_{k\ell}},
\end{align*}
which is nonzero by assumption.
Thus, $y$ is not orthogonal to $(\mathbb{F}_{q^2}^d)^{\otimes 2}_{\operatorname{sym}}$.

Next, we show that $\Pi_d^{(2)}$ projects orthogonally onto $(\mathbb{F}_{q^2}^d)^{\otimes 2}_{\operatorname{sym}}$.
To this end, choose any vector $x = \sum_{k \in [d]}\sum_{\ell \in [d]} c_{k\ell} (e_k \otimes e_\ell)$ and compute
\begin{align*}
\Pi_d^{(2)} x 
&= \frac{1}{2} \sum_{i \in [d]}\sum_{j \in [d]}\sum_{k \in [d]}\sum_{\ell \in [d]} c_{k\ell} \Big( (e_i e_i^* \otimes e_j e_j^*)(e_k \otimes e_\ell) + (e_i e_j^* \otimes e_j e_i^*)(e_k \otimes e_\ell) \Big) \\
&= \frac{1}{2} \sum_{i \in [d]}\sum_{j \in [d]}\sum_{k \in [d]}\sum_{\ell \in [d]} c_{k\ell} \Big( (e_i e_i^*)e_k \otimes (e_j e_j^*)e_\ell + (e_i e_j^*)e_k \otimes (e_j e_i^*)e_\ell \Big) \\
&= \frac{1}{2} \sum_{k \in [d]}\sum_{\ell \in [d]} c_{k\ell} \bigg(  \sum_{i \in [d]}\sum_{j \in [d]} (e_i e_i^*)e_k \otimes (e_j e_j^*)e_\ell  +  \sum_{i \in [d]}\sum_{j \in [d]} (e_i e_j^*)e_k \otimes (e_j e_i^*)e_\ell  \bigg) \\
&= \frac{1}{2} \sum_{k \in [d]}\sum_{\ell \in [d]} c_{k\ell} (e_k \otimes e_\ell + e_\ell \otimes e_k),
\end{align*}
which belongs to $(\mathbb{F}_{q^2}^d)^{\otimes 2}_{\operatorname{sym}}$.
Then
\[
x - \Pi_d^{(2)} x 
= \frac{1}{2} \sum_{k \in [d]}\sum_{\ell \in [d]} c_{k\ell}(e_k \otimes e_\ell - e_\ell \otimes e_k),
\]
and it is straightforward to check this is orthogonal to every $y \in (\mathbb{F}_{q^2}^d)^{\otimes 2}_{\operatorname{sym}}$.
As such, $x = (\Pi_d^{(2)} x) + (x - \Pi_d^{(2)} x)$ gives the desired decomposition of $x$.
\end{proof}

As a consequence of Lemma~\ref{lem.symmetric tensors are nondegenerate}, we may consider tight frames over $(\mathbb{F}_{q^2}^d)^{\otimes2}_{\operatorname{sym}}$.
This allows us to define the following analog of Proposition~\ref{prop.proj2design}(b):

\begin{definition}
$\{x_k \}_{k \in[n]}$ in $\mathbb{F}_{q^2}^d$ is an $(a,c_1,c_2)$-\textbf{projective $2$-design} if $a,c_1,c_2 \in \mathbb{F}_q$ and
\begin{itemize}
\item[(i)]
$\langle x_k, x_k \rangle = a$ for every $k \in [n]$,
\item[(ii)]
$\{x_k\}_{k\in[n]}$ is a $c_1$-tight frame for $\mathbb{F}_{q^2}^d$, and
\item[(iii)]
$\{ x_k^{\otimes 2} \}_{k\in [n]}$ is a $c_2$-tight frame for $(\mathbb{F}_{q^2}^d)^{\otimes 2}_{\operatorname{sym}}$.
\end{itemize}
\end{definition}

With this definition, the finite field setting enjoys an analogy to Proposition~\ref{prop.d^2 threshold}:

\begin{theorem}
\label{thm.finite field bound}
If $\{x_k\}_{k\in[n]}$ in $\mathbb{F}_{q^2}^d$ is a projective $2$-design, then $n\geq d^2$.
\end{theorem}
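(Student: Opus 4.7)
The strategy is to bound $n$ below by $d^2$ via the $\mathbb{F}_q$-span of the Hermitian rank-one matrices $\{x_k x_k^*\}_{k \in [n]}$ inside the $d^2$-dimensional $\mathbb{F}_q$-vector space $H := \{X \in \mathbb{F}_{q^2}^{d \times d} : X^* = X\}$. The form $\langle A, B\rangle := \operatorname{tr}(AB)$ is $\mathbb{F}_q$-valued on $H$ and nondegenerate (since $q$ is odd, by the standard argument on Hermitian basis elements). Set $V := \operatorname{span}_{\mathbb{F}_q}\{x_k x_k^*\}_{k \in [n]}$; then $V^\perp := \{Y \in H : x_k^* Y x_k = 0 \text{ for every } k \in [n]\}$ satisfies $\dim V + \dim V^\perp = d^2$, and $n \geq \dim V$, so the task reduces to controlling $V^\perp$.

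The engine is an operator identity obtained from condition (iii) by taking the partial trace over the first tensor factor of $(X \otimes I) \sum_k (x_k x_k^*)^{\otimes 2} = c_2 (X \otimes I) \Pi_d^{(2)}$, together with $\Pi_d^{(2)} = \tfrac{1}{2}(I + S)$ for the swap operator $S$ on $(\mathbb{F}_{q^2}^d)^{\otimes 2}$:
\[
\sum_{k \in [n]} (x_k^* X x_k)\, x_k x_k^* \;=\; \tfrac{c_2}{2}\bigl(\operatorname{tr}(X)\, I + X\bigr) \qquad \text{for all } X \in \mathbb{F}_{q^2}^{d\times d}.
\]
If $c_2 = 0$, then $\{x_k^{\otimes 2}\}_{k \in [n]}$ is a tight frame for $(\mathbb{F}_{q^2}^d)^{\otimes 2}_{\operatorname{sym}}$ with vanishing constant, and Proposition~\ref{prop.vanishing frame bound redundancy} immediately delivers $n \geq 2 \binom{d+1}{2} = d^2 + d$.

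Now assume $c_2 \neq 0$. Applied to any $Y \in V^\perp$, the identity forces $Y = -\operatorname{tr}(Y)\,I$, so $Y = tI$ is a scalar matrix; taking trace yields $(d+1)\operatorname{tr}(Y) = 0$, while membership in $V^\perp$ imposes $ta = x_k^* Y x_k = 0$. Whenever $d+1 \neq 0$ in $\mathbb{F}_q$, or $a \neq 0$, one concludes $Y = 0$, hence $V^\perp = 0$, $\dim V = d^2$, and $n \geq d^2$.

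The remaining subcase, $d + 1 = 0$ in $\mathbb{F}_q$ and $a = 0$, is the main obstacle: here $V^\perp = \mathbb{F}_q \cdot I$, so $\dim V = d^2 - 1$ and spanning alone falls one dimension short. To recover the missing dimension I would invoke conditions (i) and (ii): taking trace of $\sum_k x_k x_k^* = c_1 I$ gives $c_1 d = n a = 0$, and since $d = -1 \neq 0$ in $\mathbb{F}_q$ we obtain $c_1 = 0$. Thus $\sum_{k \in [n]} x_k x_k^* = 0$ is a nontrivial $\mathbb{F}_q$-linear dependency among the spanning set of $V$, so the surjection $\mathbb{F}_q^n \twoheadrightarrow V$ sending $e_k \mapsto x_k x_k^*$ has at least one-dimensional kernel, yielding $n \geq \dim V + 1 = d^2$.
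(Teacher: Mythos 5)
Your proof is correct, and it reaches the bound by a genuinely different route in the main cases. The common core is the operator identity $\sum_{k\in[n]}(x_k^*Xx_k)\,x_kx_k^*=\tfrac{c_2}{2}(\operatorname{tr}X\cdot I_d+X)$, which is exactly the paper's Lemma~\ref{lem.decomposition of A} (derived by the same partial trace computation), and both arguments dispose of $c_2=0$ identically via Proposition~\ref{prop.vanishing frame bound redundancy}. The divergence is in how the identity is exploited when $c_2\neq 0$: the paper case-splits on $c_1$ and works with $\mathbb{F}_{q^2}$-spans, substituting $I_d=c_1^{-1}\sum_k x_kx_k^*$ when $c_1\neq0$ to show $\{x_kx_k^*\}_{k\in[n]}$ spans $\mathbb{F}_{q^2}^{d\times d}$, and when $c_1=0$ counting a dependent spanning set $\{x_kx_k^*\}_{k\in[n]}\cup\{I_d\}$ of size $n+1$; you instead case-split on $(a,\,d\bmod p)$ and run a duality argument over $\mathbb{F}_q$ in the $d^2$-dimensional Hermitian space $H$, using nondegeneracy of the trace form to convert the identity into the statement $V^\perp\subseteq\mathbb{F}_q\cdot I_d$ and then killing $V^\perp$ unless $a=0$ and $d\equiv-1\bmod p$, where your kernel count ($c_1d=na=0$ forces $c_1=0$, giving the all-ones dependency) mirrors the paper's Case III. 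All steps check out: the form $\operatorname{tr}(AB)$ is indeed $\mathbb{F}_q$-valued, symmetric, and nondegenerate on $H$ for odd $q$ (your sketch suffices, and note $\dim V+\dim V^\perp=\dim H$ does not require the form to be nondegenerate on $V$ itself), and your exceptional case correctly yields $V^\perp=\mathbb{F}_q\cdot I_d$ exactly since $a=0$ puts $I_d$ in $V^\perp$. What each approach buys: yours uses condition (ii) only in the one exceptional subcase and proves the stronger generic conclusion that $\{x_kx_k^*\}_{k\in[n]}$ spans all of $H$ over $\mathbb{F}_q$, which is precisely the kind of statement appearing in Proposition~\ref{prop.gerzon}; the paper's version avoids any appeal to nondegeneracy of the trace form and organizes the cases by $c_1$, which is then reused verbatim in the proof of Theorem~\ref{thm.characterization of tight designs}, where the equivalence $a=0\Leftrightarrow c_1=0$ (given $c_2\neq0$) is established and shows your case split and theirs coincide after the fact (your subcase $a=0$, $d\not\equiv-1\bmod p$ is in fact vacuous by the identity applied to $A=I_d$, though carrying it costs nothing).
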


To prove this theorem, we will repeatedly make use of the following:

\begin{lemma}
\label{lem.decomposition of A}
If $\{x_k\}_{k\in[n]}$ in $\mathbb{F}_{q^2}^d$ is an $(a,c_1,c_2)$-projective $2$-design with $c_2\neq0$, then for every $A\in\mathbb{F}_{q^2}^{d\times d}$, it holds that
\[
A
=\frac{2}{c_2}\sum_{k\in[n]}x_kx_k^*Ax_kx_k^*-\operatorname{tr}A\cdot I_d.
\]
\end{lemma}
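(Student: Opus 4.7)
The plan is to reduce the identity to the case of rank-one matrices $A=uv^*$ by linearity, and then verify it by pairing the $c_2$-tight frame condition for $\{x_k^{\otimes 2}\}_{k\in[n]}$ with arbitrary test vectors on either side.

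First, I would unpack property (iii) of the projective $2$-design. By Lemma~\ref{lem.symmetric tensors are nondegenerate} together with the equivalent form (ii$'$) of a tight frame, this reads
\[
\sum_{k\in[n]}(x_k\otimes x_k)(x_k\otimes x_k)^* \;=\; c_2\,\Pi_d^{(2)}.
\]
Pairing both sides on the left with $w\otimes z$ and on the right with $u\otimes v$, for arbitrary $u,v,w,z\in\mathbb{F}_{q^2}^d$, and using $\Pi_d^{(2)}(u\otimes v)=\tfrac12(u\otimes v+v\otimes u)$ (which follows from the computation in the proof of Lemma~\ref{lem.symmetric tensors are nondegenerate}), yields the scalar identity
\[
\sum_{k\in[n]}(x_k^*u)(x_k^*v)(w^*x_k)(z^*x_k) \;=\; \tfrac{c_2}{2}\bigl[(w^*u)(z^*v)+(w^*v)(z^*u)\bigr].
\]

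Next, since both sides of the claimed formula are $\mathbb{F}_{q^2}$-linear in $A$, it suffices to verify the identity when $A=uv^*$. For arbitrary $z,w\in\mathbb{F}_{q^2}^d$, the sesquilinear entry expands as
\[
z^*\bigg(\sum_{k\in[n]} x_kx_k^*(uv^*)x_kx_k^*\bigg)w \;=\; \sum_{k\in[n]}(z^*x_k)(x_k^*u)(v^*x_k)(x_k^*w),
\]
which after rearranging commuting scalars in $\mathbb{F}_{q^2}$ matches the left-hand side of the polarized identity under a suitable relabeling of $u,v,w,z$. The corresponding right-hand side simplifies to $\tfrac{c_2}{2}\bigl[(z^*u)(v^*w)+(v^*u)(z^*w)\bigr] = \tfrac{c_2}{2}\,z^*\bigl(A+\operatorname{tr}(A)I_d\bigr)w$. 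Since $z,w$ are arbitrary, this forces $\sum_{k\in[n]} x_kx_k^*Ax_kx_k^*=\tfrac{c_2}{2}(A+\operatorname{tr}(A)I_d)$. Multiplying through by $2/c_2$, which is legal because $c_2\ne 0$ and the standing hypothesis that $q$ is odd makes $2$ a unit in $\mathbb{F}_{q^2}$, rearranges to the stated formula.

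The main obstacle is purely book-keeping: making sure the four vectors $u,v,w,z$ line up correctly between the polarized tight frame identity and the expansion of the matrix entries of $\sum_k x_kx_k^*(uv^*)x_kx_k^*$. Once the scalar factors in $\mathbb{F}_{q^2}$ are permuted into the right order the verification is immediate, and one notes that only property (iii) of the $(a,c_1,c_2)$-projective $2$-design is needed -- the norm condition (i) and the $c_1$-tight frame condition (ii) play no role.
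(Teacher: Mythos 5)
Your proof is correct, and it reaches the paper's key intermediate identity $\sum_{k\in[n]}x_kx_k^*Ax_kx_k^*=\frac{c_2}{2}\big(A+\operatorname{tr}A\cdot I_d\big)$ by a genuinely different mechanism. The paper stays at the operator level: it multiplies $\sum_{k\in[n]}(x_k^{\otimes2})(x_k^{\otimes2})^*=c_2\,\Pi_d^{(2)}$ on the right by $A\otimes I_d$ and applies a partial trace $\operatorname{tr}_1(A\otimes B):=\operatorname{tr}(A)\cdot B$ (defined ad hoc and extended linearly), cycling the trace on the left-hand side and computing $\operatorname{tr}_1\big(c_2\,\Pi_d^{(2)}(A\otimes I_d)\big)=\frac{c_2}{2}\big(\operatorname{tr}A\cdot I_d+A\big)$ directly from the explicit formula for $\Pi_d^{(2)}$; this handles all $A$ at once and mirrors the quantum ``trace out one subsystem'' device invoked after Proposition~\ref{prop.proj2design}. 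You instead scalarize: polarize the tight-frame identity against elementary tensors to get a four-vector identity, reduce to rank-one $A=uv^*$ by linearity (legitimate, since the map $A\mapsto\sum_k x_kx_k^*Ax_kx_k^*$ involves no conjugation of $A$ and the matrix units $e_ie_j^*$ span), and match entries; the relabeling you allude to does work out (take $w'=z$, $z'=v$, $u'=u$, $v'=w$ in the polarized identity, noting $\operatorname{tr}(uv^*)=v^*u$), and since all four test vectors are free variables no Frobenius-conjugation subtlety arises. What each approach buys: the paper's partial-trace computation is shorter and coordinate-free, while yours avoids introducing $\operatorname{tr}_1$ altogether at the cost of the rank-one reduction and bookkeeping. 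Your two side remarks are also accurate and consistent with the paper: the standing assumption that $q$ is odd is exactly what licenses dividing by $2$ (the paper divides by $2$ in the same way via the $\frac{c_2}{2}$ factor), and indeed only property (iii) of the design is used --- the paper's proof likewise never touches (i) or (ii).
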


\begin{proof}
To prove the result, we multiply both sides of the identity
$
\sum_{k\in[n]}(x_k^{\otimes2})(x_k^{\otimes2})^*
=c_2\cdot \Pi_d^{(2)}
$
by $A\otimes I_d$ and then ``trace out'' the first subsystem.
Explicitly, we define the \textit{partial trace} $\operatorname{tr}_1\colon\mathbb{F}_{q^2}^{d\times d}\otimes\mathbb{F}_{q^2}^{d\times d}\to\mathbb{F}_{q^2}^{d\times d}$ by taking
\[
\operatorname{tr}_1(A\otimes B)
:=\operatorname{tr}(A)\cdot B
\]
and extending linearly.
Since $\{x_k\}_{k\in[n]}$ is a projective $2$-design, we have
\begin{equation}
\label{eq.to trace out}
\operatorname{tr}_1\bigg(\sum_{k\in[n]}(x_k^{\otimes2})(x_k^{\otimes2})^*(A\otimes I_d)\bigg)
=\operatorname{tr}_1\Big(c_2\cdot \Pi_d^{(2)}\cdot (A\otimes I_d)\Big).
\end{equation}
We cycle the trace to simplify the left-hand side of \eqref{eq.to trace out}:
\begin{align}
\operatorname{tr}_1\bigg(\sum_{k\in[n]}(x_k^{\otimes2})(x_k^{\otimes2})^*(A\otimes I_d)\bigg)
\nonumber
&=\operatorname{tr}_1\bigg(\sum_{k\in[n]}(x_kx_k^*)^{\otimes2}(A\otimes I_d)\bigg)\\
\nonumber
&=\operatorname{tr}_1\bigg(\sum_{k\in[n]}x_kx_k^*A\otimes x_kx_k^*\bigg)\\
\label{eq.LHS of trace out}
&=\sum_{k\in[n]}\operatorname{tr}(x_kx_k^*A)\cdot x_kx_k^*
=\sum_{k\in[n]}x_k(x_k^*Ax_k)x_k^*.
\end{align}
For the right-hand side of \eqref{eq.to trace out}, we apply the definition of $\Pi_d^{(2)}$:
\begin{align}
\operatorname{tr}_1\Big(c_2\cdot \Pi_d^{(2)}\cdot (A\otimes I_d)\Big)
\nonumber
&=\operatorname{tr}_1\bigg(\frac{c_2}{2}\sum_{i\in[d]}\sum_{j\in[d]}\Big(e_ie_i^*\otimes e_je_j^*+e_ie_j^*\otimes e_je_i^*\Big)(A\otimes I_d)\bigg)\\
\nonumber
&=\frac{c_2}{2}\sum_{i\in[d]}\sum_{j\in[d]}\Big(\operatorname{tr}(e_ie_i^*A)\cdot e_je_j^*+\operatorname{tr}(e_ie_j^*A)\cdot e_je_i^*\Big)\\
\label{eq.RHS of trace out}
&=\frac{c_2}{2}\Big(\operatorname{tr}A\cdot I_d+A\Big).
\end{align}
The result follows by equating \eqref{eq.LHS of trace out} to \eqref{eq.RHS of trace out} and rearranging.
\end{proof}

\begin{proof}[Proof of Theorem~\ref{thm.finite field bound}]
Suppose $\{x_k\}_{k\in[n]}$ in $\mathbb{F}_{q^2}^d$ is an $(a,c_1,c_2)$-projective $2$-design.

\textbf{Case I:}
$c_2=0$.
Then $\{x_kx_k^*\}_{k\in[n]}$ is an $(a^2,0)$-NTF for $(\mathbb{F}_{q^2}^d)_{\operatorname{sym}}^{\otimes 2}$.
By Proposition~\ref{prop.vanishing frame bound redundancy},
\[
n
\geq 2\cdot\operatorname{dim}((\mathbb{F}_{q^2}^d)_{\operatorname{sym}}^{\otimes 2})
=d^2+d
>d^2.
\]

\textbf{Case II:}
$c_1\neq 0$ and $c_2\neq 0$.
Apply Lemma~\ref{lem.decomposition of A} and the identity $\sum_{k\in[n]}x_kx_k^*=c_1\cdot I_d$:
\begin{align*}
A
&=\frac{2}{c_2}\sum_{k\in[n]}x_kx_k^*Ax_kx_k^*-\operatorname{tr}A\cdot I_d\\
&=\frac{2}{c_2}\sum_{k\in[n]}x_kx_k^*Ax_kx_k^*-\operatorname{tr}A\cdot \frac{1}{c_1}\sum_{k\in[n]}x_kx_k^*
=\sum_{k\in[n]}\bigg(\frac{2}{c_2}x_k^*Ax_k-\frac{1}{c_1}\operatorname{tr}A\bigg)x_kx_k^*
\end{align*}
for every $A\in\mathbb{F}_{q^2}^{d\times d}$.
It follows that the $\mathbb{F}_{q^2}$-span of $\{x_kx_k^*\}_{k\in[n]}$ is $\mathbb{F}_{q^2}^{d\times d}$, and so $n\geq d^2$.

\textbf{Case III:}
$c_1=0$ and $c_2\neq 0$.
Lemma~\ref{lem.decomposition of A} implies that the $\mathbb{F}_{q^2}$-span of $\{x_kx_k^*\}_{k\in[n]}\cup\{I_d\}$ is $\mathbb{F}_{q^2}^{d\times d}$, while the identity $\sum_{k\in[n]}x_kx_k^*=c_1\cdot I_d=0$ implies that $\{x_kx_k^*\}_{k\in[n]}$ is linearly dependent.
Since $\{x_kx_k^*\}_{k\in[n]}\cup\{I_d\}$ is a linearly dependent spanning set of size $n+1$, it follows that $n+1\geq d^2+1$, i.e., $n\geq d^2$.
\end{proof}

\begin{theorem}
\label{thm.characterization of tight designs}
Any two of the following statements together imply the third statement:
\begin{itemize}
\item[(a)]
$\{x_k\}_{k\in[n]}$ in $\mathbb{F}_{q^2}^d$ is a projective $2$-design.
\item[(b)]
$n=d^2$.
\item[(c)]
There exist $a,b,c_1\in\mathbb{F}_q$ such that
\begin{itemize}
\item[(i)]
$a^2\neq b$,
\item[(ii)]
$a^2-b=\frac{bc_1}{a}$ if $a\neq0$,
\item[(iii)]
$d\equiv-1\bmod p$ if $a=0$, and
\item[(iv)]
$\{x_k\}_{k\in[n]}$ in $\mathbb{F}_{q^2}^d$ is an $(a,b,c_1)$-equiangular tight frame.
\end{itemize}
\end{itemize}
When (a), (b), and (c) hold, $\{x_k\}_{k\in[n]}$ is an $(a,c_1,c_2)$-projective $2$-design with $c_2=2(a^2-b)$.
\end{theorem}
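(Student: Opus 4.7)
The plan is to establish the three implications (a) and (c) imply (b), (a) and (b) imply (c), and (b) and (c) imply (a) separately. The first is nearly immediate: Theorem~\ref{thm.finite field bound} gives $n \geq d^2$ for any projective $2$-design, while condition (c)(i) together with Proposition~\ref{prop.gerzon} (Gerzon's bound) gives $n \leq d^2$, forcing $n = d^2$.

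For (a) and (b) imply (c), the key observation is that $c_2 \neq 0$, since the $c_2 = 0$ subcase in the proof of Theorem~\ref{thm.finite field bound} forces $n \geq d^2 + d > d^2$. Given $c_2 \neq 0$, I would apply Lemma~\ref{lem.decomposition of A} with $A = x_\ell x_\ell^*$ and rearrange using the NTF identity $\sum_{k \in [n]} x_k x_k^* = c_1 I_d$ to produce an $\mathbb{F}_{q^2}$-linear relation among $\{x_k x_k^*\}_{k \in [n]}$ whose coefficients encode the squared inner products $|\langle x_k, x_\ell\rangle|^2$. When $a \neq 0$, applying Lemma~\ref{lem.decomposition of A} at $A = I_d$ combined with Proposition~\ref{prop.parameter relations}(a) forces $c_1 \neq 0$, so Proposition~\ref{prop.gerzon} makes $\{x_k x_k^*\}_{k \in [n]}$ $\mathbb{F}_{q^2}$-linearly independent; every coefficient in the relation must then vanish, simultaneously giving equiangularity with constant $b = ac_2/(2c_1)$ and the parameter identity (c)(ii), together with $c_2 = 2(a^2 - b)$. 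When $a = 0$, the same application of Lemma~\ref{lem.decomposition of A} at $A = I_d$ forces $d \equiv -1 \pmod{p}$ (giving (c)(iii)) and hence $c_1 = 0$; the unique-dependency clause of Proposition~\ref{prop.gerzon} then identifies all off-diagonal $|\langle x_k, x_\ell\rangle|^2$ with a common value $b = -c_2/2$. In either subcase, $c_2 \neq 0$ is precisely $a^2 \neq b$, giving (c)(i).

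For (b) and (c) imply (a), set $c_2 := 2(a^2 - b)$, which is nonzero by (c)(i). The projective $2$-design identity $\sum_{k \in [n]} (x_k x_k^*)^{\otimes 2} = c_2 \Pi_d^{(2)}$ is equivalent, via the partial-trace computation in the proof of Lemma~\ref{lem.decomposition of A} together with the injectivity of the map $M \mapsto \operatorname{tr}_1\bigl(M (\,\cdot\, \otimes I_d)\bigr)$ on $\operatorname{End}(\mathbb{F}_{q^2}^d)^{\otimes 2}$, to the identity
\[
\sum_{k \in [n]} x_k x_k^* A x_k x_k^* = \tfrac{c_2}{2}\bigl(A + \operatorname{tr}(A) \cdot I_d\bigr) \qquad \text{for every } A \in \mathbb{F}_{q^2}^{d \times d}.
\]
Both sides are $\mathbb{F}_q$-linear in $A$, so it suffices to check the identity on an $\mathbb{F}_q$-spanning set of the Hermitian matrices. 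When $a \neq 0$, Proposition~\ref{prop.gerzon} shows $\{x_k x_k^*\}_{k \in [n]}$ is already a basis, and the check at $A = x_m x_m^*$ collapses, via the ETF relations, to (c)(ii). When $a = 0$, one augments the spanning set by $I_d$; the check at $A = x_m x_m^*$ reduces to $bc_1 = 0$ (which holds because (c)(iii) forces $c_1 = 0$ via Proposition~\ref{prop.parameter relations}(a)), and the check at $A = I_d$ reduces to $c_2(d+1) = 0$, which is exactly (c)(iii).

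I expect the main obstacle to be bookkeeping the subcases according to whether $a$, $c_1$, or $d$ vanishes in $\mathbb{F}_q$, since Proposition~\ref{prop.gerzon} bifurcates at $a = 0$ and since conditions (c)(ii) and (c)(iii) together encode precisely the algebraic identities needed to reconcile the two regimes with the uniform formula $c_2 = 2(a^2 - b)$.
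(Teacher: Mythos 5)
There is one genuine gap, and it sits in your (a)$\wedge$(b)$\Rightarrow$(c) direction: both of your appeals to Proposition~\ref{prop.gerzon} there are circular. Gerzon's bound, as stated in the paper, takes as \emph{hypothesis} that $\{x_k\}_{k\in[n]}$ is an $(a,b)$-equiangular system with $a^2\neq b$ --- but at that point in your argument equiangularity is precisely the conclusion you are extracting; the design hypotheses (a) and (b) do not yet provide any common value $b$ for the cross inner products. The same objection applies to your use of the unique-dependency clause when $a=0$, which is likewise conditioned on an equiangular system meeting Gerzon's bound with equality. (Separately, ``$c_1\neq 0$, so Proposition~\ref{prop.gerzon} makes $\{x_kx_k^*\}_{k\in[n]}$ independent'' is a non sequitur even apart from circularity, since Gerzon takes no input involving $c_1$.) The repair is exactly the paper's device: combine Lemma~\ref{lem.decomposition of A} with the tight-frame identity $\sum_{k\in[n]}x_kx_k^*=c_1 I_d$ to show \emph{directly} that $\{x_kx_k^*\}_{k\in[n]}$ spans $\mathbb{F}_{q^2}^{d\times d}$ when $c_1\neq0$, hence is a basis because $n=d^2$, so the coefficients in your relation are unique; and when $c_1=0$, that $\{x_kx_k^*\}_{k\in[n]}\cup\{I_d\}$ spans $\mathbb{F}_{q^2}^{d\times d}$, so $\{x_kx_k^*\}_{k\in[n]}$ spans the $(d^2-1)$-dimensional space of traceless matrices, whence the known dependency $\sum_{k\in[n]}x_kx_k^*=0$ among these $n=d^2$ matrices is the unique one up to scale, purely by dimension count. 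With that substitution your extraction of $b=ac_2/(2c_1)$ (resp.\ $b=-c_2/2$), of condition (c)(ii), and of $c_2=2(a^2-b)$ goes through verbatim and coincides with the paper's Cases II and III; your preliminary steps (ruling out $c_2=0$ via the $(a^2,0)$-NTF bound, and deducing $c_1\neq0$ from $a\neq0$ via Lemma~\ref{lem.decomposition of A} at $A=I_d$ plus Proposition~\ref{prop.parameter relations}(a)) are correct and match the paper.

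The other two implications are sound and essentially the paper's. Your (a)$\wedge$(c)$\Rightarrow$(b) is identical. In (b)$\wedge$(c)$\Rightarrow$(a) the invocation of Proposition~\ref{prop.gerzon} is now legitimate, since (c)(iv) and (c)(i) supply its hypotheses; and your reformulation via injectivity of $M\mapsto\operatorname{tr}_1\big(M(\cdot\otimes I_d)\big)$ is a correct, mildly repackaged version of the paper's route through Lemma~\ref{lem.ebr-esque move} --- both amount to the statement that the Choi matrix determines the map --- and your finite checks at $A=x_mx_m^*$ and $A=I_d$ reduce to exactly the identities the paper verifies ((c)(ii) when $a\neq0$; $c_1=0$ and $d\equiv-1\bmod p$ when $a=0$). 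Two small points to tighten there: first, both sides of your displayed identity are in fact $\mathbb{F}_{q^2}$-linear, not merely $\mathbb{F}_q$-linear, and you genuinely need this, since an $\mathbb{F}_q$-spanning set of the Hermitian matrices only $\mathbb{F}_q$-spans the Hermitians, while (for odd $q$) the Hermitians $\mathbb{F}_{q^2}$-span all of $\mathbb{F}_{q^2}^{d\times d}$, which is what lets the verified identity propagate to every $A$ as your injectivity step requires; second, when $a=0$ you should record that (c)(iii) gives $d\not\equiv0\bmod p$, so that $I_d$ together with the traceless Hermitians really does span the Hermitians.
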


To prove Theorem~\ref{thm.characterization of tight designs}, we need a method of demonstrating that a collection of vectors forms a projective $2$-design.
For this, we will apply the following:

\begin{lemma}
\label{lem.ebr-esque move}
Take $\mathbb{F}$ to be $\mathbb{C}$ or $\mathbb{F}_{q^2}$.
Given $\{x_k\}_{k\in[n]}$ in $\mathbb{F}^d$, define $\Psi\colon\mathbb{F}^{d\times d}\to\mathbb{F}^{d\times d}$ by
\[
\Psi(A)
:=\sum_{k\in[n]}x_kx_k^*A^*x_kx_k^*.
\]
Then
\[
\sum_{k\in[n]}(x_k^{\otimes2})(x_k^{\otimes2})^*
=\sum_{i\in[d]}\sum_{j\in[d]}e_ie_j^*\otimes \Psi(e_ie_j^*).
\]
\end{lemma}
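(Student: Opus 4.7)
The plan is to show that both the left- and right-hand sides of the claimed identity equal $\sum_{k\in[n]}(x_kx_k^*)\otimes(x_kx_k^*)$. Since the entire argument uses only the formal identities $(A\otimes B)^*=A^*\otimes B^*$, $(A\otimes B)(C\otimes D)=(AC)\otimes(BD)$, and bilinearity of the tensor product, it will be insensitive to whether $\mathbb{F}=\mathbb{C}$ or $\mathbb{F}=\mathbb{F}_{q^2}$, provided one keeps track of the involution.

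For the left-hand side, I would write $x_k^{\otimes 2}=x_k\otimes x_k$ and apply the two identities above to get
\[
(x_k^{\otimes 2})(x_k^{\otimes 2})^*
=(x_k\otimes x_k)(x_k^*\otimes x_k^*)
=(x_kx_k^*)\otimes(x_kx_k^*),
\]
so that summation over $k$ gives the target expression immediately.

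For the right-hand side, I would first use $(e_ie_j^*)^*=e_je_i^*$ together with the definition of $\Psi$ to rewrite
\[
\sum_{i\in[d]}\sum_{j\in[d]}e_ie_j^*\otimes\Psi(e_ie_j^*)
=\sum_{k\in[n]}\sum_{i\in[d]}\sum_{j\in[d]}e_ie_j^*\otimes x_kx_k^*\,e_je_i^*\,x_kx_k^*.
\]
The crucial observation is then that $e_i^*x_k$ and $x_k^*e_j$ are \emph{scalars} in $\mathbb{F}$, so the inner matrix factors as
\[
x_kx_k^*e_je_i^*x_kx_k^*=(e_i^*x_k)(x_k^*e_j)\,x_kx_k^*,
\]
and these scalars can be moved into the first tensor slot. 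Applying the elementary completeness identities $\sum_i(e_i^*x_k)e_i=x_k$ and $\sum_j(x_k^*e_j)e_j^*=x_k^*$ collapses the double sum over $i,j$ into $x_kx_k^*$, yielding $\sum_k(x_kx_k^*)\otimes(x_kx_k^*)$ and matching the left-hand side.

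There is no genuine obstacle here: the proof is pure bookkeeping with tensor products and adjoints. The only mild subtlety to get right is that the $A^*$ in the definition of $\Psi$ exactly produces $e_je_i^*$ (not $e_ie_j^*$) in the inner matrix, and this is precisely what makes the scalars $(e_i^*x_k)(x_k^*e_j)$ assemble into $x_kx_k^*$ upon summing over $i,j$. The argument transfers verbatim from $\mathbb{C}$ to $\mathbb{F}_{q^2}$ because the standard basis vectors $e_i$ have entries fixed by the relevant involution in either setting.
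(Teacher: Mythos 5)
Your proof is correct, and it takes a genuinely different route from the paper's. The paper recycles its Section~2 machinery: it forms the auxiliary map $\overline{\Psi}(A):=\overline{\Psi(A)}$, observes that $T\circ\overline{\Psi}$ has an entanglement-breaking-style decomposition with rank-one factors $x_k\overline{x}_k^\top$, applies Lemma~\ref{lem.ebrofTofZ} to convert this into a decomposition of $\overline{\Psi}$ itself, invokes Lemma~\ref{lem.decompvschoi} to identify $\sum_{i,j}e_ie_j^*\otimes\overline{\Psi}(e_ie_j^*)$ with $\sum_k(\overline{x}_k^{\otimes2})(\overline{x}_k^{\otimes2})^*$, and finally conjugates both sides; in effect, it reads the identity as saying that the right-hand side is the Choi matrix of $\Psi$, at the cost of the accompanying remark that the Section~2 proofs remain valid over $\mathbb{F}_{q^2}$ under the Frobenius interpretation of conjugation. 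You instead compute both sides directly against the standard basis: $(x_k^{\otimes2})(x_k^{\otimes2})^*=(x_kx_k^*)\otimes(x_kx_k^*)$ on the left, while on the right the steps $(e_ie_j^*)^*=e_je_i^*$, the factoring out of the scalars $(e_i^*x_k)(x_k^*e_j)$, and the completeness relations $\sum_i(e_i^*x_k)e_i=x_k$ and $\sum_j(x_k^*e_j)e_j^*=x_k^*$ all check out over either field (scalars commute in both $\mathbb{C}$ and $\mathbb{F}_{q^2}$, and the entries of the $e_i$ are fixed by the involution, as you note). Your version is more elementary and self-contained, and it makes the field-independence transparent since only formal tensor and adjoint identities are used; what it forgoes is the paper's economy---three lines given its existing lemmas---and the conceptual pointer that the identity is exactly the Choi-matrix correspondence exploited throughout Section~2.
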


To prove Lemma~\ref{lem.ebr-esque move}, we will use lemmas from the previous section, with the appropriate interpretation of conjugation in the case $\mathbb{F}=\mathbb{F}_{q^2}$; indeed, the proofs of these results are valid under this interpretation.

\begin{proof}[Proof of Lemma~\ref{lem.ebr-esque move}]
Consider the linear map $\overline{\Psi}\colon\mathbb{F}^{d\times d}\to\mathbb{F}^{d\times d}$ defined by $\overline{\Psi}(A):=\overline{\Psi(A)}$.
Then
\[
(T\circ\overline{\Psi})(A)
=\Psi(A)^*
=\sum_{k\in[n]}x_kx_k^*Ax_kx_k^*
=\sum_{k\in[n]}x_k\overline{x}_k^\top A(x_k\overline{x}_k^\top)^*.
\]
Lemma~\ref{lem.ebrofTofZ} then gives
\[
\overline{\Psi}(A)
=\sum_{k\in[n]}\overline{x}_k\overline{x}_k^\top A(\overline{x}_k\overline{x}_k^\top)^*.
\]
Finally, we apply Lemma~\ref{lem.decompvschoi} to get
\[
\sum_{k\in[n]}(\overline{x}_k^{\otimes2})(\overline{x}_k^{\otimes2})^*
=\sum_{i\in[d]}\sum_{j\in[d]}e_ie_j^*\otimes \overline{\Psi}(e_ie_j^*),
\]
and we take conjugates of both sides to obtain the result.
\end{proof}

\begin{proof}[Proof of Theorem~\ref{thm.characterization of tight designs}]
First, (a)$\wedge$(c)$\Rightarrow$(b) follows from Theorem~\ref{thm.finite field bound} and Proposition~\ref{prop.gerzon}.

Next, we demonstrate (a)$\wedge$(b)$\Rightarrow$(c) by considering each case in the proof of Theorem~\ref{thm.finite field bound}.

\textbf{Case I:}
$c_2=0$.
This case does not occur since $n\geq d^2+d$ implies $n\neq d^2$.

For the remaining cases, we have $c_2\neq0$.
For these cases, we will use the fact that $a=0$ if and only if $c_1=0$.
To see this, apply Lemma~\ref{lem.decomposition of A} with $A=I_d$ and the identity $\sum_{k\in[n]}x_kx_k^*=c_1\cdot I_d$ to get
\[
I_d
=\frac{2}{c_2}\sum_{k\in[n]}x_kx_k^*I_dx_kx_k^*-\operatorname{tr}I_d\cdot I_d
=\frac{2a}{c_2}\sum_{k\in[n]}x_kx_k^*-d\cdot I_d
=\bigg(\frac{2ac_1}{c_2}-d\bigg)\cdot I_d
\]
If $0\in\{a,c_1\}$, then the above identity implies $d\equiv-1\bmod p$ and $n=d^2\equiv 1\bmod p$.
Furthermore, since $\{x_k\}_{k\in[n]}$ is an $(a,c_1)$-NTF, Proposition~\ref{prop.parameter relations}(a) gives that $na=dc_1$.
If $0\in\{a,c_1\}$, then squaring both sides gives $a^2=n^2a^2=d^2c_1^2=c_1^2$.
It follows that $a=0$ if and only if $c_1=0$, as claimed.
Furthermore, $d\equiv-1\bmod p$ if $a=0$.

\textbf{Case II:}
$c_1\neq0$ and $c_2\neq 0$.
For every $A\in\mathbb{F}_{q^2}^{d\times d}$, Lemma~\ref{lem.decomposition of A} and the identity $\sum_{k\in[n]}x_kx_k^*=c_1\cdot I_d$ together imply
\begin{equation}
\label{eq.unique decomposition}
A
=\sum_{k\in[n]}\bigg(\frac{2}{c_2}x_k^*Ax_k-\frac{1}{c_1}\operatorname{tr}A\bigg)x_kx_k^*.
\end{equation}
Since $\{x_kx_k^*\}_{k\in[n]}$ is a spanning set of $\mathbb{F}_{q^2}^{d\times d}$ of size $n=d^2=\operatorname{dim}(\mathbb{F}_{q^2}^{d\times d})$, it is also a basis.
Then the decomposition \eqref{eq.unique decomposition} is unique.
For $A:=x_\ell x_\ell^*$, this implies
\begin{equation}
\label{eq.equate unique coefficients}
\frac{2}{c_2}\langle x_k,x_\ell\rangle^{q+1}-\frac{a}{c_1}
=\frac{2}{c_2}x_k^*Ax_k-\frac{1}{c_1}\operatorname{tr}A
=\left\{\begin{array}{cl}
1&\text{if }k=\ell\\
0&\text{if }k\neq\ell.
\end{array}\right.
\end{equation}
It follows that
\[
\langle x_k,x_\ell\rangle^{q+1}
=\frac{ac_2}{2c_1}
=:b
\]
whenever $k\neq\ell$, i.e., $\{x_k\}_{k\in[n]}$ is an $(a,b,c_1)$-ETF.
Then \eqref{eq.equate unique coefficients} gives
\begin{align}
\label{eq.unique1}
\frac{2}{c_2}a^2-\frac{a}{c_1}
&=1,\\
\label{eq.unique2}
\frac{2}{c_2}b-\frac{a}{c_1}
&=0.
\end{align}
Subtract \eqref{eq.unique2} from \eqref{eq.unique1} and rearrange to get
\begin{equation}
\label{eq.unique3}
a^2-b
=\frac{c_2}{2}
\neq0.
\end{equation}
Finally, since $a\neq 0$, we may rearrange \eqref{eq.unique2} to get $\frac{c_2}{2}=\frac{bc_1}{a}$, which combined with \eqref{eq.unique3} gives $a^2-b=\frac{bc_1}{a}$, as claimed.

\textbf{Case III:}
$c_1=0$ and $c_2\neq0$.
We claim that the $\mathbb{F}_{q^2}$-span of $\{x_kx_k^*\}_{k\in[n]}$ equals the $(d^2-1)$-dimensional subspace $\{X\in\mathbb{F}_{q^2}^{d\times d}:\operatorname{tr}X=0\}$.
Indeed, the inclusion $\subseteq$ follows from the fact that  $\operatorname{tr}(x_kx_k^*)=a=0$ for every $k\in[n]$.
The reverse inclusion follows from a dimension count, since Lemma~\ref{lem.decomposition of A} implies that the $\mathbb{F}_{q^2}$-span of $\{x_kx_k^*\}_{k\in[n]}\cup\{I_d\}$ is all of $\mathbb{F}_{q^2}^{d\times d}$.
Next, since $n=(d^2-1)+1$, it follows that the identity $\sum_{k\in[n]}x_kx_k^*=c_1\cdot I_d$ gives the only linear dependency of $\{x_kx_k^*\}_{k\in[n]}$ up to scalar multiplication, namely, $\sum_{k\in[n]}x_kx_k^*=0$.
Taking $A:=x_\ell x_\ell^*$ in Lemma~\ref{lem.decomposition of A} gives
\[
x_\ell x_\ell^*
=\frac{2}{c_2}\sum_{k\in[n]}x_kx_k^*x_\ell x_\ell^*x_kx_k^*-\operatorname{tr}x_\ell x_\ell^*\cdot I_d
=\frac{2}{c_2}\sum_{k\in[n]}\langle x_k,x_\ell\rangle^{q+1} x_kx_k^*,
\]
and rearranging gives
\begin{equation}
\label{eq.dependency}
\sum_{k\in[n]}z_{k\ell}x_kx_k^*=0,
\end{equation}
where
\[
z_{k\ell}
:=\left\{\begin{array}{ll}
\frac{2}{c_2}\langle x_k,x_\ell\rangle^{q+1}&\text{if }k\neq\ell\\
\frac{2}{c_2}\langle x_\ell,x_\ell\rangle^{q+1}-1&\text{if }k=\ell\\
\end{array}\right\}
=\left\{\begin{array}{cl}
\frac{2}{c_2}\langle x_k,x_\ell\rangle^{q+1}&\text{if }k\neq\ell\\
-1&\text{if }k=\ell.\\
\end{array}\right.
\]
Since $\sum_{k\in[n]}x_kx_k^*=0$ is the unique dependency up to scaling, the dependency \eqref{eq.dependency} requires $z_{k\ell}=-1$ for every $k\neq\ell$, i.e.,
\[
\langle x_k,x_\ell\rangle^{q+1}
=-\frac{c_2}{2}
=:b.
\]
As such, $\{x_k\}_{k\in[n]}$ is an $(a,b,c_1)$-ETF.
Since $c_2\neq0$, we have $b\neq0=a^2$, as claimed.

Finally, we demonstrate (b)$\wedge$(c)$\Rightarrow$(a) with the help of Lemma~\ref{lem.ebr-esque move}.
To this end, we consider the linear map $\Psi^*\colon A\mapsto \Psi(A)^*$.
Since $\{x_k\}_{k\in[n]}$ is an $(a,b,c_1)$-ETF, then 
\begin{align}
\Psi^*(x_\ell x_\ell^*)
=\sum_{k\in[n]} x_kx_k^* x_\ell x_\ell^* x_kx_k^*
\nonumber
&=\sum_{k\in[n]}\langle x_k,x_\ell\rangle^{q+1} x_kx_k^*\\
\label{eq.psistar}
&=a^2 x_\ell x_\ell^* + \sum_{\substack{k\in[n]\\k\neq\ell}}bx_kx_k^*
=(a^2-b) x_\ell x_\ell^* + bc_1\cdot I_d.
\end{align}
This expression obfuscates the linearity of $\Psi^*$, which we elucidate in two separate cases.

\textbf{Case I:}
$a\neq0$.
Since $\operatorname{tr}(x_\ell x_\ell^*)=a$, we may continue \eqref{eq.psistar}:
\[
\Psi^*(x_\ell x_\ell^*)
=(a^2-b) x_\ell x_\ell^* + \frac{bc_1}{a}\cdot\operatorname{tr}(x_\ell x_\ell^*)\cdot I_d.
\]
Since $0\neq a^2\neq b$, it follows from Proposition~\ref{prop.gerzon} that the $\mathbb{F}_{q^2}$-span of $\{x_kx_k^*\}_{k\in[n]}$ has dimension $\operatorname{dim}_{\mathbb{F}_q}\{X\in\mathbb{F}_{q^2}^{d\times d}:X=X^*\}=d^2$, and so it equals $\mathbb{F}_{q^2}^{d\times d}$.
Thus, we may linearly extend the above identity to get
\[
\Psi^*(A)
=(a^2-b) A + \frac{bc_1}{a}\cdot\operatorname{tr}A \cdot I_d
\]
for every $A\in\mathbb{F}_{q^2}^{d\times d}$.
In particular, we have
\begin{align*}
\Psi(e_ie_j^*)
=\Big(\Psi^*(e_ie_j^*)\Big)^*
&=\Big((a^2-b) e_ie_j^* + \frac{bc_1}{a}\cdot\operatorname{tr}(e_ie_j^*) \cdot I_d\Big)^*\\
&=(a^2-b) e_je_i^* + \frac{bc_1}{a}\cdot\delta_{ij}\cdot I_d
=(a^2-b)\cdot\Big(e_je_i^*+\delta_{ij}\cdot I_d\Big),
\end{align*}
where the last step applies our assumption that $a^2-b=\frac{bc_1}{a}$.
Then Lemma~\ref{lem.ebr-esque move} gives
\begin{align*}
\sum_{k\in[n]}(x_k^{\otimes2})(x_k^{\otimes2})^*
&=\sum_{i\in[d]}\sum_{j\in[d]}e_ie_j^*\otimes \Psi(e_ie_j^*)\\
&=2(a^2-b)\sum_{i\in[d]}\sum_{j\in[d]} e_ie_j^*\otimes \frac{1}{2}\Big( e_je_i^*+\delta_{ij}\cdot I_d \Big)
=2(a^2-b)\cdot\Pi_d^{(2)}.
\end{align*}
Since $c_2:=2(a^2-b)\neq0$ by assumption, it follows that $\{x_k\}_{k\in[n]}$ is an $(a,c_1,c_2)$-projective $2$-design, as desired.

\textbf{Case II:}
$a=0$.
Then $d\equiv-1\bmod p$ by assumption.
Furthermore, since $\{x_k\}_{k\in[n]}$ is an $(a,c_1)$-NTF, Proposition~\ref{prop.gerzon}(a) gives that $0=na=dc_1=-c_1$, i.e., $c_1=0$.
With this, we continue \eqref{eq.psistar}:
\begin{equation}
\label{eq.psistar2}
\Psi^*(x_\ell x_\ell^*)
=(a^2-b) x_\ell x_\ell^* + bc_1\cdot I_d
=-b x_\ell x_\ell^*.
\end{equation}
By equality in Gerzon's bound, Proposition~\ref{prop.gerzon} implies that the $\mathbb{F}_{q^2}$-span of $\{x_kx_k^*\}_{k\in[n]}$ has dimension $\operatorname{dim}_{\mathbb{F}_q}\{X\in\mathbb{F}_{q^2}^{d\times d}:X=X^*,~\operatorname{tr}X=0\}=d^2-1$, and therefore equals $\{X\in\mathbb{F}_{q^2}^{d\times d}:\operatorname{tr}X=0\}$.
By linearity, \eqref{eq.psistar2} implies $\Psi^*(A)=-bA$ for every $A\in \mathbb{F}_{q^2}^{d\times d}$ with $\operatorname{tr}A=0$.
In order to determine $\Psi^*(A)$ for all $A$, we also consider
\[
\Psi^*(I_d)
=\sum_{k\in[n]}x_kx_k^* I_d x_kx_k^*
=a\sum_{k\in[n]}x_kx_k^*
=0.
\]
Since $\{x_kx_k^*\}_{k\in[n]}\cup\{I_d\}$ spans $\mathbb{F}_{q^2}^{d\times d}$, we may obtain a formula for $\Psi^*(A)$ by extending linearly.
To this end, denote
\[
\hat{A}
:=A+\operatorname{tr}A\cdot I_d
=A-\frac{\operatorname{tr}A}{d}\cdot I_d
\in\{X\in\mathbb{F}_{q^2}^{d\times d}:\operatorname{tr}X=0\}.
\]
Then
\[
\Psi^*(A)
=\Psi^*(\hat{A}-\operatorname{tr}A\cdot I_d)
=\Psi^*(\hat{A})-\operatorname{tr}A\cdot \Psi^*(I_d)
=-b \hat{A}
=-b(A+\operatorname{tr}A\cdot I_d).
\]
In particular, we have
\[
\Psi(e_ie_j^*)
=\Big(\Psi^*(e_ie_j^*)\Big)^*
=\Big(-b(e_ie_j^*+\operatorname{tr}(e_ie_j^*)\cdot I_d)\Big)^*
=-b(e_je_i^*+\delta_{ij}\cdot I_d).
\]
Then Lemma~\ref{lem.ebr-esque move} gives
\begin{align*}
\sum_{k\in[n]}(x_k^{\otimes2})(x_k^{\otimes2})^*
&=\sum_{i\in[d]}\sum_{j\in[d]}e_ie_j^*\otimes \Psi(e_ie_j^*)\\
&=-2b\sum_{i\in[d]}\sum_{j\in[d]} e_ie_j^*\otimes \frac{1}{2}\Big( e_je_i^*+\delta_{ij}\cdot I_d \Big)
=-2b\cdot\Pi_d^{(2)}.
\end{align*}
Since $c_2:=-2b\neq-2a^2=0$ by assumption, it follows that $\{x_k\}_{k\in[n]}$ is an $(a,c_1,c_2)$-projective $2$-design, as desired.
\end{proof}

\subsection{A construction for Gerzon equality}

Theorem~\ref{thm.characterization of tight designs} allows one to easily identify projective $2$-designs over $\mathbb{F}_{q^2}^d$.
For example, \cite{GreavesIJM:20} constructs a $(0,1,0)$-ETF of $d^2$ vectors in $\mathbb{F}_{3^2}^d$ for every $d=2^{2\ell+1}$ with $\ell\in\mathbb{N}$.
Since $2^{2\ell+1}\equiv -1\bmod 3$, Theorem~\ref{thm.characterization of tight designs} implies that each of these systems of vectors forms a $(0,0,1)$-projective $2$-design for $\mathbb{F}_{3^2}^d$.
The following result constructs additional examples:

\begin{theorem}
\label{thm.gerzoneqfinitefield}
Select any prime $p$, positive integer $k$, and prime power $r$ such that $p$ divides $r-1$ and $r^2+r+1$ divides $p^k+1$.
Put $q:=p^k$ and $d:=r^2+r+1$.
Let $D\subseteq\mathbb{Z}/d\mathbb{Z}$ denote the Singer difference set, select a primitive element $\alpha\in\mathbb{F}_{q^2}^\times$, put $\omega:=\alpha^{(q^2-1)/d}$, and define translation and modulation operators by
\[
(Tf)(x):=f(x-1),
\qquad
(Mf)(x):=\omega^{x}\cdot f(x),
\qquad
f\colon \mathbb{Z}/d\mathbb{Z}\to\mathbb{F}_{q^2}.
\]
Then $\{M^sT^t\mathbf{1}_D\}_{s,t\in\mathbb{Z}/d\mathbb{Z}}$ is a $(2,1,2d)$-equiangular tight frame of $d^2$ vectors in $\mathbb{F}_{q^2}^d$.
\end{theorem}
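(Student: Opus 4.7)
The plan is to verify directly that the Gabor system $\{f_{s,t} := M^sT^t\mathbf{1}_D\}_{s,t\in\mathbb{Z}/d\mathbb{Z}}$ satisfies the three defining conditions of an $(a,b,c_1)$-ETF with $a=2$, $b=1$, $c_1=2d$. I first collect the arithmetic consequences of the hypotheses that drive the computation: since $\alpha$ has order $q^2-1$, the element $\omega=\alpha^{(q^2-1)/d}$ has exact multiplicative order $d$ in $\mathbb{F}_{q^2}^\times$; since $d\mid q+1$, one gets $\omega^{q+1}=1$ and hence $\overline{\omega^m}=\omega^{qm}=\omega^{-m}$; and since $p\mid r-1$, the identities $r=1$, $r+1=2$, and $d=r^2+r+1=3$ all hold in $\mathbb{F}_q$. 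I also recall that $D$ is a Singer $(d,r+1,1)$-difference set, so each nonzero $u\in\mathbb{Z}/d\mathbb{Z}$ has a unique representation as $d_1-d_2$ with $d_i\in D$.

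For the inner products, write $f_{s,t}(x)=\omega^{sx}\mathbf{1}_{D+t}(x)$ to obtain
\[
\langle f_{s,t},f_{s',t'}\rangle=\sum_{x\in\mathbb{Z}/d\mathbb{Z}}\omega^{(s'-s)x}\mathbf{1}_{(D+t)\cap(D+t')}(x).
\]
Specialization to $(s,t)=(s',t')$ immediately gives $\langle f_{s,t},f_{s,t}\rangle=|D|=r+1=2$, yielding condition (iii) with $a=2$. For $t\neq t'$, the difference set property forces $|(D+t)\cap(D+t')|=1$, so the inner product is a single power of $\omega$, and therefore $\langle f_{s,t},f_{s',t'}\rangle\langle f_{s',t'},f_{s,t}\rangle=1$. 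For $t=t'$ and $s\neq s'$, setting $c=s'-s$ and $S=\sum_{y\in D}\omega^{cy}$, I would expand
\[
S\overline{S}=\sum_{(d_1,d_2)\in D\times D}\omega^{c(d_1-d_2)}=(r+1)+\sum_{u\neq 0}\omega^{cu}=(r+1)-1=r=1,
\]
using the difference set property and the geometric-series identity $\sum_{u=0}^{d-1}\omega^{cu}=0$, which is valid in $\mathbb{F}_{q^2}$ because $\omega^c\neq 1$. This gives condition (iv) with $b=1$.

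For the tight frame condition (ii), I would compute $\sum_{s,t}f_{s,t}f_{s,t}^*$ entrywise. Its $(x,y)$-entry factors as $\bigl(\sum_s\omega^{s(x-y)}\bigr)\bigl(\sum_t\mathbf{1}_{D+t}(x)\mathbf{1}_{D+t}(y)\bigr)$; the first factor is $d$ when $x=y$ and $0$ otherwise (again by geometric series), and the second, when $x=y$, counts translates of $D$ containing $x$, giving $|D|=r+1$. Thus $\sum_{s,t}f_{s,t}f_{s,t}^*=d(r+1)\,I_d=2d\cdot I_d$ in $\mathbb{F}_q$, which is condition (ii) with $c_1=2d$. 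I do not foresee a serious obstacle: this is essentially the classical Alltop/Delsarte--Goethals--Seidel construction of Gabor ETFs from Singer difference sets, and the only delicate point is that everything survives the passage to positive characteristic. That survival rests on two facts, namely that $\omega$ has genuine order $d$ in $\mathbb{F}_{q^2}^\times$ (so the geometric-series identities continue to hold) and that the integers $r+1$, $r$, and $d(r+1)$ reduce to the intended parameters $2$, $1$, $2d$ modulo $p$, which are exactly the roles played by the two divisibility hypotheses.
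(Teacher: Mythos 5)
Your computations are all correct, and your route is essentially the paper's argument made self-contained rather than citation-based. Where you compute the frame operator entrywise, the paper instead invokes Proposition~6.1 of \cite{GreavesIJM:20} (the Gabor system of any window is an $(a,da)$-NTF); where you expand $S\overline{S}=\sum_{(d_1,d_2)\in D\times D}\omega^{c(d_1-d_2)}=(r+1)-1=r\equiv 1\bmod p$ in the same-translate case, the paper cites Theorem~5.7 of \cite{GreavesIJM:20}, which says the submatrix $\mathcal{F}_{D+t}$ is a $(|D|,|D|-\lambda,d)$-ETF --- your character sum is exactly the proof of that fact. Your different-translate case coincides with the paper's Case~II: the development of a $\lambda=1$ difference set is a symmetric design, $(D+t)\cap(D+t')$ is a single point, and the resulting power of $\omega$ has $(q+1)$-st power $1$ because $d\mid q+1$. (A cosmetic slip: in the same-translate case the inner product is $\omega^{ct}S$, not $S$, but the phase cancels in $\langle f,g\rangle\langle g,f\rangle=S\overline{S}$.)

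There is, however, one genuine gap: you verify what you call ``the three defining conditions,'' but the paper's definition of a $c$-tight frame has a fourth, namely (i) $\operatorname{span}\{x_k\}_{k\in[n]}=V$, and as remarked after the definition, (ii) implies (i) only when $c\neq 0$. Here $c_1=2d\equiv 6\bmod p$, which vanishes when $p=2$ (the other bad prime $p=3$ cannot occur, since $p\mid r-1$ would force $3\mid d$, contradicting $d\mid 3^k+1$). The case $p=2$ is not marginal --- it is the generic case in the paper's table, e.g.\ $(d,p,k,r)=(13,2,6,3)$ --- and for $p=2$ your entrywise computation yields $\sum_{s,t}f_{s,t}f_{s,t}^*=0$, which is condition (ii) with $c_1=0$ and carries no spanning information whatsoever; over finite fields a non-spanning family can have vanishing frame operator (e.g., isotropic $x$ together with $\gamma x$ where $\gamma^{q+1}=-1$), which is exactly why Proposition~\ref{prop.vanishing frame bound redundancy} treats $c=0$ as a degenerate regime. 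The fix is short: since $\omega$ has exact order $d$, the $d^2$ matrices $M^sT^t$ are supported on disjoint generalized diagonals for distinct $t$ and are Vandermonde-independent in $s$, hence form a basis of $\mathbb{F}_{q^2}^{d\times d}$, so $\operatorname{span}\{M^sT^t\mathbf{1}_D\}\supseteq\{A\mathbf{1}_D:A\in\mathbb{F}_{q^2}^{d\times d}\}=\mathbb{F}_{q^2}^d$; alternatively, for fixed $t$ the vectors $\{M^sT^t\mathbf{1}_D\}_{s\in\mathbb{Z}/d\mathbb{Z}}$ are the columns of $\operatorname{diag}(\mathbf{1}_{D+t})\mathcal{F}$ and span all vectors supported on $D+t$, and the translates $D+t$ cover $\mathbb{Z}/d\mathbb{Z}$. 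The paper sidesteps the issue entirely because Proposition~6.1 of \cite{GreavesIJM:20} delivers the NTF property, spanning included.
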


The ETF construction in Theorem~\ref{thm.gerzoneqfinitefield} is a finite field analog of a biangular Gabor frame that was suggested in~\cite{BojarovskaP:15,HaasCTC:17}.
In the finite field setting, one might view this as a Steiner ETF~\cite{FickusMT:12} in which a harmonic ETF~\cite{Turyn:65,StrohmerH:03,XiaZG:05,DingF:07} plays the role of a ``flat'' simplex.
Empirically, there are many $(p,k,r)\in\mathbb{N}^3$ that satisfy the constraints that $p$ is prime, $r$ is a prime power, $p$ divides $r-1$, and $r^2+r+1$ divides $p^k+1$.
In fact, there are infinitely many, conditioned on the Lenstra--Pomerance--Wagstaff conjecture of the infinitude of Mersenne primes: whenever there exists a Mersenne prime $r=2^m-1$, we may take $p=2$ and $k=3m$.
We claim that the ETF construction in Theorem~\ref{thm.gerzoneqfinitefield} forms a projective $2$-design for $\mathbb{F}_{q^2}^d$ whenever $p>3$.
First, we have $a^2=4\neq 1=b$ and $a=2\neq0$, and so by Theorem~\ref{thm.characterization of tight designs}, it suffices to verify that $a^2-b=\frac{bc_1}{a}$.
Indeed, $a^2-b=4-1=3$ and $\frac{bc_1}{a}=d$, and since $p$ divides $r-1$ by assumption, we have $d=r^2+r+1\equiv 1+1+1=3\bmod p$, as desired.
The following table lists the smallest dimensions for which Theorem~\ref{thm.gerzoneqfinitefield} offers a construction, with gray columns indicating projective $2$-designs. 

\begin{center}
\begin{tabular}{|r| rr a rrrr a r a rr a r|}
\hline
$d$&13&57&73&307&757&993&1723&1723&2257&2257&2451&3541&3541&5113\\
$p$&2&2&7&2&2&2&2&5&2&23&2&2&29&2\\
$k$&6&9&12&51&378&15&287&287&90&30&63&118&590&213\\
$r$&3&7&8&17&27&31&41&41&47&47&49&59&59&71\\\hline
\end{tabular}
\end{center}

\begin{proof}[Proof of Theorem~\ref{thm.gerzoneqfinitefield}]
Proposition~6.1 in~\cite{GreavesIJM:20} gives that $\{M^sT^t\mathbf{1}_D\}_{s,t\in\mathbb{Z}/d\mathbb{Z}}$ is an $(a,da)$-NTF with
\[
a
=\langle \mathbf{1}_D,\mathbf{1}_D\rangle
=|D|
\equiv2\bmod p,
\]
where the last step applies the fact that $|D|=r+1\equiv 2\bmod p$.
It remains to verify equiangularity with parameter $b=1$.
To this end, we compute
\[
\langle M^sT^t\mathbf{1}_D,M^uT^v\mathbf{1}_D\rangle^{q+1}
\]
for every $(s,t)\neq(u,v)$ in two cases.

\textbf{Case I:} $t=v$.
Consider the discrete Fourier transform matrix $\mathcal{F}=[\omega^{ij}]_{i,j\in\mathbb{Z}/d\mathbb{Z}}\in\mathbb{F}_{q^2}^{d\times d}$, and observe that $\{M^sT^t\mathbf{1}_D\}_{s\in\mathbb{Z}/d\mathbb{Z}}$ equals the columns of $\operatorname{diag}(\mathbf{1}_{D+t})\mathcal{F}$, which in turn is a zero-padded version of the $|D|\times d$ submatrix $\mathcal{F}_{D+t}$.
Since $D+t$ is a difference set with parameter $\lambda=1$, (the proof of) Theorem~5.7 in~\cite{GreavesIJM:20} gives that $\mathcal{F}_{D+t}$ is a $(|D|,|D|-\lambda,d)$-ETF.
It follows that
\[
\langle M^sT^t\mathbf{1}_D,M^uT^t\mathbf{1}_D\rangle^{q+1}
=|D|-\lambda
\equiv 1\bmod p
\]
whenever $s\neq u$.

\textbf{Case II:} $t\neq v$.
We exploit the well-known fact that the so-called \textit{development}
\[
\{D+z:z\in G\}
\]
of a difference set $D\subseteq G$ with parameter $\lambda$ gives a symmetric block design in which every pair of blocks intersects in exactly $\lambda$ points (see Theorem~18.6 in~\cite{JungnickelPS:06}).
Since the Singer difference set has parameter $\lambda=1$, we have 
\begin{align*}
\langle M^sT^t\mathbf{1}_D,M^uT^v\mathbf{1}_D\rangle
&=\sum_{x\in\mathbb{Z}/d\mathbb{Z}}(M^sT^t\mathbf{1}_D)(x)^q(M^uT^v\mathbf{1}_D)(x)\\
&=\sum_{x\in\mathbb{Z}/d\mathbb{Z}}(\omega^{sx}\mathbf{1}_D(x-t))^q(\omega^{ux}\mathbf{1}_D(x-v))\\
&=\sum_{x\in\mathbb{Z}/d\mathbb{Z}}\omega^{(qs+u)x}\cdot\mathbf{1}_D(x-t)\mathbf{1}_D(x-v)
=\omega^{(qs+u)x_0},
\end{align*}
where $x_0$ is the unique member of the block intersection $(D+t)\cap(D+v)$.
Then
\[
\langle M^sT^t\mathbf{1}_D,M^uT^v\mathbf{1}_D\rangle^{q+1}
=(\omega^{(qs+u)x_0})^{q+1}
=1,
\]
as desired.
\end{proof}

\section{The quaternionic setting}

Consider the following generalization of Proposition~\ref{prop.proj2design}(c) for $\mathbb{F}\in\{\mathbb{R},\mathbb{C},\mathbb{H}\}$; see~\cite{Munemasa:07,Waldron:20}.
Put $m:=\frac{1}{2}\cdot[\mathbb{F}:\mathbb{R}]$ and $N=md$, and given $A,B\in\mathbb{F}^{d\times d}$, define $\langle A,B\rangle:=\operatorname{Re}\operatorname{tr}(A^*B)$.
We say unit vectors $\{x_k\}_{k\in[n]}$ in $\mathbb{F}^d$ form a \textbf{projective $2$-design} if
\[
\frac{1}{n^2}\sum_{k\in[n]}\sum_{\ell\in[n]}\langle x_kx_k^*,x_\ell x_\ell^*\rangle
=\frac{m}{N},
\qquad
\frac{1}{n^2}\sum_{k\in[n]}\sum_{\ell\in[n]}\langle x_kx_k^*,x_\ell x_\ell^*\rangle^2
=\frac{m(m+1)}{N(N+1)}.
\]
We say a projective $2$-design $\{x_k\}_{k\in[n]}$ for $\mathbb{F}^d$ is \textbf{tight} if it is also equiangular, which occurs precisely when equality is achieved in the lower bound $n\geq d+m(d^2-d)$~\cite{BannaiH:85}.
In this section, we use tight projective $2$-designs in quaternionic spaces to form nice arrangements of real subspaces.
Given $r$-dimensional subspaces $\{S_k\}_{k\in[n]}$ of a real vector space $V$, select an orthonormal basis $\{x_{ki}\}_{i\in[r]}$ for each $S_k$ and compute the cross-Gramians $G_{k\ell}:=[\langle x_{ki},x_{\ell j}\rangle]_{i,j\in[r]}$.
We say $\{S_k\}_{k\in[n]}$ is \textbf{equi-isoclinic} if there exists $\alpha\geq0$ such that $G_{k\ell}^*G_{k\ell}=\alpha I_r$ for every $k,\ell\in[n]$ with $k\neq\ell$.
We say $\{S_k\}_{k\in[n]}$ forms a \textbf{tight fusion frame} if
\[
\frac{1}{n^2}\sum_{k\in[n]}\sum_{\ell\in[n]}\|G_{k\ell}\|_F^2
=\frac{r^2}{\operatorname{dim}V}.
\]
An equi-isoclinic tight fusion frame is an optimal code in the Grassmannian $\operatorname{Gr}(r,V)$ under the \textit{chordal distance}, as it achieves equality in the \textit{simplex bound}~\cite{ConwayHS:96}.
It is also an optimal packing in terms of the \textit{spectral distance}, where we view each $S_k$ as a subset of projective space and seek to maximize the minimum distance between these subsets of this metric space~\cite{DhillonHST:08}.
One may rightly view such objects as analogs of equiangular tight frames.

Since quaternion multiplication is noncommutative (e.g., $\mathrm{i}\mathrm{j}=-\mathrm{j}\mathrm{i}$), we start by carefully  verifying a few simple facts that may be unfamiliar to the reader:

\begin{lemma}
$\mathbb{H}^{d\times d}$ is a real Hilbert space with inner product $\langle A,B\rangle=\operatorname{Re}\operatorname{tr}(A^*B)$.
\end{lemma}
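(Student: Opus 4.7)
The plan is to verify the four standard axioms for a real inner product space, together with completeness. First I would remark that $\mathbb{H}^{d\times d}$ is naturally a real vector space of dimension $4d^2$, which we regard without any additional $\mathbb{H}$-module structure. The only thing that takes real attention, due to the noncommutativity of quaternion multiplication, is establishing symmetry and the $\mathbb{R}$-bilinearity of the form $\langle A,B\rangle := \operatorname{Re}\operatorname{tr}(A^*B)$; completeness will then be automatic from the finite real dimension.

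For symmetry, the key observations are that $(A^*B)^* = B^*A$ and that for every $X\in\mathbb{H}^{d\times d}$ one has $\operatorname{Re}\operatorname{tr}(X^*) = \operatorname{Re}\operatorname{tr}(X)$. The latter holds because $(X^*)_{ii} = \overline{X_{ii}}$ so that $\operatorname{tr}(X^*) = \sum_i \overline{X_{ii}} = \overline{\operatorname{tr}(X)}$, and quaternion conjugation fixes the real part. Applying this to $X = A^*B$ gives $\langle B,A\rangle = \operatorname{Re}\operatorname{tr}(B^*A) = \operatorname{Re}\operatorname{tr}((A^*B)^*) = \operatorname{Re}\operatorname{tr}(A^*B) = \langle A,B\rangle$. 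Note I am carefully avoiding the cyclic property $\operatorname{tr}(XY) = \operatorname{tr}(YX)$, which fails for quaternionic matrices in general.

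For $\mathbb{R}$-bilinearity, the point is simply that real scalars lie in the center of $\mathbb{H}$ and are fixed by conjugation, so $(\alpha A)^* = \alpha A^*$ for $\alpha\in\mathbb{R}$, and then linearity of $\operatorname{Re}\operatorname{tr}(\cdot)$ together with distributivity of matrix multiplication finishes the job in the first slot; the second slot then follows from symmetry. For positive definiteness, expanding entrywise yields $(A^*A)_{ii} = \sum_j \overline{A_{ji}}A_{ji} = \sum_j |A_{ji}|^2$, where each $|q|^2 = \overline{q}q$ is a nonnegative real number. Thus $\operatorname{tr}(A^*A) = \sum_{i,j}|A_{ji}|^2$ is already real and nonnegative, and vanishes only when every entry of $A$ does.

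Finally, since $\mathbb{H}^{d\times d}$ is a finite-dimensional real normed space under the induced norm $\|A\| = (\sum_{i,j}|A_{ji}|^2)^{1/2}$, it is automatically complete, so the claim follows. The main (minor) obstacle is purely bookkeeping around noncommutativity: being disciplined about not invoking the cyclic trace identity and instead routing symmetry through the conjugate-transpose trick described above.
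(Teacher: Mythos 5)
Your proof is correct, but it takes a genuinely different route from the paper's. The paper proves the lemma in one stroke by exhibiting an explicit isometry with Euclidean space: it first observes that for quaternions $u,v$, the quantity $\operatorname{Re}(\overline{u}v)$ equals the dot product of the real coordinate vectors of $u$ and $v$, and then expands $\operatorname{Re}\operatorname{tr}(A^*B)=\sum_{i\in[d]}\sum_{j\in[d]}\operatorname{Re}\big(\overline{A_{ij}}B_{ij}\big)=\operatorname{vec}(A)\cdot\operatorname{vec}(B)$, so that symmetry, bilinearity, positive definiteness, and completeness are all inherited simultaneously from the standard inner product on $\mathbb{R}^{4d^2}$. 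You instead verify the axioms one at a time, obtaining symmetry via $\operatorname{tr}(X^*)=\overline{\operatorname{tr}(X)}$ applied to $X=A^*B$ --- a clean device that, as you correctly note, sidesteps cyclicity of the trace, which genuinely fails over $\mathbb{H}$ (already for $d=1$: $\operatorname{tr}(\mathrm{i}\mathrm{j})=\mathrm{k}\neq-\mathrm{k}=\operatorname{tr}(\mathrm{j}\mathrm{i})$); the paper only establishes the cyclic property for $\operatorname{Re}\operatorname{tr}$ in a separate subsequent lemma, via the embedding $\mathbb{H}\to\mathbb{C}^{2\times2}$. What the paper's approach buys is brevity and an explicit identification of the form with the Euclidean dot product, which is convenient bookkeeping for later computations; what yours buys is a self-contained axiom check whose symmetry argument makes visible exactly where noncommutativity could bite and how to avoid it. Your positive-definiteness computation $\operatorname{tr}(A^*A)=\sum_{i,j}|A_{ji}|^2$ and the appeal to finite real dimension for completeness are both sound.
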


\begin{proof}
First, given $u,v\in\mathbb{H}$, we observe that $\operatorname{Re}(\overline{u}v)$ equals the dot product between the corresponding real coordinate vectors:
\begin{align*}
\operatorname{Re}((a-b\mathrm{i}-c\mathrm{j}-d\mathrm{k})(e+f\mathrm{i}+g\mathrm{j}+h\mathrm{k}))
=ae+bf+cg+dh
=(a,b,c,d)\cdot(e,f,g,h).
\end{align*}
Next, given $A\in\mathbb{H}^{d\times d}$, let $\operatorname{vec}(A)\in\mathbb{R}^{4d^2}$ denote the vector of real coordinates of the entries of $A$.
Then the above observation gives
\[
\operatorname{Re}\operatorname{tr}(A^*B)
=\operatorname{Re}\sum_{j\in[d]}(A^*B)_{jj}
=\operatorname{Re}\sum_{j\in[d]}\sum_{i\in[d]} (A^*)_{ji}B_{ij}
=\sum_{i\in[d]}\sum_{j\in[d]} \operatorname{Re}\overline{A_{ij}}B_{ij}
=\operatorname{vec}(A)\cdot\operatorname{vec}(B).
\]
The result follows.
\end{proof}

\begin{lemma}
Suppose $A\in\mathbb{H}^{m\times n}$ and $B\in\mathbb{H}^{n\times m}$.
Then $\operatorname{Re}\operatorname{tr}(AB)=\operatorname{Re}\operatorname{tr}(BA)$.
\end{lemma}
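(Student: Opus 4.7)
The plan is to reduce the matrix-level cyclicity (of the real part of the trace) to the scalar identity $\operatorname{Re}(uv) = \operatorname{Re}(vu)$ for $u,v \in \mathbb{H}$, then reassemble via linearity of $\operatorname{Re}$. Writing the traces as double sums,
$$\operatorname{tr}(AB) = \sum_{i\in[m]}\sum_{j\in[n]} A_{ij} B_{ji}, \qquad \operatorname{tr}(BA) = \sum_{j\in[n]}\sum_{i\in[m]} B_{ji} A_{ij},$$
so after swapping the order of summation and pulling $\operatorname{Re}$ inside, it suffices to verify $\operatorname{Re}(A_{ij}B_{ji}) = \operatorname{Re}(B_{ji}A_{ij})$ for each pair $(i,j)$.

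For the scalar identity, I would decompose $u = u_0 + \mathbf{u}$ and $v = v_0 + \mathbf{v}$ into real and pure-imaginary parts. Using the standard identification of pure imaginary quaternions with $\mathbb{R}^3$, a direct expansion with the multiplication table for $\{\mathrm{i},\mathrm{j},\mathrm{k}\}$ yields $\mathbf{u}\mathbf{v} = -\mathbf{u}\cdot\mathbf{v} + \mathbf{u}\times\mathbf{v}$, where the cross-product piece is again pure imaginary. Since the mixed terms $u_0 \mathbf{v}$ and $v_0 \mathbf{u}$ contribute nothing to the real part, we obtain $\operatorname{Re}(uv) = u_0 v_0 - \mathbf{u}\cdot\mathbf{v}$, which is manifestly symmetric in $u$ and $v$.

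There is no real obstacle here; the calculation is a few lines. The only point worth emphasizing is that the full identity $\operatorname{tr}(AB) = \operatorname{tr}(BA)$ can genuinely fail over $\mathbb{H}$ because of noncommutativity, but the discrepancy lives entirely in the pure-imaginary part (specifically, in cross products that swap sign under transposition), so passing to $\operatorname{Re}$ is precisely what restores cyclicity. As an alternative route, one could instead observe that $\operatorname{Re}(uv) = \operatorname{Re}(\overline{uv}) = \operatorname{Re}(\bar v \bar u)$ and note that $\operatorname{Re}$ is invariant under conjugation of either factor, but the real/imaginary decomposition above feels most transparent.
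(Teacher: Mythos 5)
Your proof is correct, but it takes a genuinely different route from the paper's. You reduce the matrix identity termwise, via $\operatorname{tr}(AB)=\sum_{i}\sum_{j}A_{ij}B_{ji}$ (correctly preserving the order of factors, which is the one delicate point over a noncommutative ring), to the scalar symmetry $\operatorname{Re}(uv)=\operatorname{Re}(vu)$, which you establish through the decomposition $\operatorname{Re}(uv)=u_0v_0-\mathbf{u}\cdot\mathbf{v}$. The paper instead uses the algebra homomorphism $f\colon\mathbb{H}\to\mathbb{C}^{2\times2}$ with $\operatorname{tr}f(z)=2\operatorname{Re}z$, so that $\operatorname{Re}\operatorname{tr}M=\frac{1}{2}\operatorname{tr}f(M)$ after applying $f$ entrywise, and then simply quotes cyclicity of the ordinary complex trace. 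Your argument is more elementary and self-contained, and it makes explicit exactly where noncommutativity enters: the discrepancy $\operatorname{tr}(AB)-\operatorname{tr}(BA)$ is a sum of commutators, each pure imaginary (the anticommuting cross-product terms), so passing to $\operatorname{Re}$ is precisely what restores cyclicity --- a point the paper's proof hides inside the representation. The paper's route buys brevity and reuses a standard tool; yours buys transparency and shows the reduction works in any associative $\mathbb{R}$-algebra whose scalar ``real part'' is symmetric on products.

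One caution about your closing aside: as stated, ``$\operatorname{Re}$ is invariant under conjugation of either factor'' is false for a single factor --- e.g.\ $\operatorname{Re}(\overline{\mathrm{i}}\,\mathrm{i})=1$ while $\operatorname{Re}(\mathrm{i}\,\mathrm{i})=-1$. What is true is $\operatorname{Re}\overline{z}=\operatorname{Re}z$ together with $\overline{uv}=\overline{v}\,\overline{u}$, which yields $\operatorname{Re}(uv)=\operatorname{Re}(\overline{v}\,\overline{u})$; but to conclude $\operatorname{Re}(\overline{v}\,\overline{u})=\operatorname{Re}(vu)$ one cannot just conjugate again (that is circular) and instead needs essentially your decomposition, since conjugating both factors flips the sign of each imaginary part and leaves $v_0u_0-\mathbf{v}\cdot\mathbf{u}$ unchanged. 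So the alternative route is not actually shorter; since your primary argument does not rely on it, this does not affect the correctness of the proof.
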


\begin{proof}
Consider the algebra homomorphism $f\colon\mathbb{H}\to\mathbb{C}^{2\times2}$ defined by
\[
f(a+b\mathrm{i}+c\mathrm{j}+d\mathrm{k})
=\left[\begin{array}{rr}
a+b\mathrm{i}&c+d\mathrm{i}\\
-c+d\mathrm{i}&a-b\mathrm{i}
\end{array}\right].
\]
Observe that $\operatorname{tr}f(z)=2\operatorname{Re}z$.
Given $M\in\mathbb{H}^{n\times n}$, one may apply $f$ entrywise to obtain $f(M)\in\mathbb{C}^{2n\times 2n}$, and then $\operatorname{Re}\operatorname{tr}M = \frac{1}{2}\operatorname{tr}f(M)$.
Applying this to $AB$ and $BA$ gives
\[
\operatorname{Re}\operatorname{tr}(AB)
=\tfrac{1}{2}\operatorname{tr}(f(AB))
=\tfrac{1}{2}\operatorname{tr}(f(A)f(B))
=\tfrac{1}{2}\operatorname{tr}(f(B)f(A))
=\tfrac{1}{2}\operatorname{tr}(f(BA))
=\operatorname{Re}\operatorname{tr}(BA),
\]
as claimed.
\end{proof}

\begin{lemma}
Suppse $A,B\in\mathbb{H}^{d\times d}$ satisfy $A^*=A$ and $B^*=-B$.
Then $\langle A,B\rangle=0$.
\end{lemma}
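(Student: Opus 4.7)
The plan is to exploit the cyclicity lemma (Lemma just preceding this one) together with the fact that taking the adjoint preserves the real part of the trace.

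First I would rewrite $\langle A,B\rangle$ using $A^*=A$ as
\[
\langle A,B\rangle = \operatorname{Re}\operatorname{tr}(A^*B) = \operatorname{Re}\operatorname{tr}(AB).
\]
Next I would show that $\operatorname{Re}\operatorname{tr}(X^*) = \operatorname{Re}\operatorname{tr}(X)$ for every square quaternionic matrix $X$, which follows immediately from the entrywise identity $(X^*)_{ii} = \overline{X_{ii}}$ together with $\operatorname{Re}\overline{z} = \operatorname{Re}z$ for $z\in\mathbb{H}$. Applying this with $X = AB$ and then using $A^* = A$, $B^* = -B$ gives
\[
\operatorname{Re}\operatorname{tr}(AB) = \operatorname{Re}\operatorname{tr}((AB)^*) = \operatorname{Re}\operatorname{tr}(B^*A^*) = -\operatorname{Re}\operatorname{tr}(BA).
\]
Finally I would apply the preceding lemma, $\operatorname{Re}\operatorname{tr}(BA) = \operatorname{Re}\operatorname{tr}(AB)$, to conclude $\operatorname{Re}\operatorname{tr}(AB) = -\operatorname{Re}\operatorname{tr}(AB)$, hence $\langle A,B\rangle = \operatorname{Re}\operatorname{tr}(AB) = 0$.

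There is no real obstacle here: the only subtlety worth highlighting is that over $\mathbb{H}$ one cannot cycle the full trace (only its real part), which is exactly why we needed the preceding cyclicity lemma, and one cannot freely pull scalars around inside products. Both pitfalls are avoided because every manipulation above either stays inside $\operatorname{Re}\operatorname{tr}(\cdot)$ or uses only the entrywise conjugation identity $(X^*)_{ii} = \overline{X_{ii}}$, which holds coordinatewise and so is unaffected by noncommutativity.
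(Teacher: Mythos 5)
Your proof is correct and takes essentially the same route as the paper: both arguments derive $\langle A,B\rangle=-\langle A,B\rangle$ by combining the cyclicity of $\operatorname{Re}\operatorname{tr}$ from the preceding lemma with an adjoint-invariance step. The only cosmetic difference is that the paper invokes symmetry of the real inner product (from the Hilbert-space lemma), whereas you prove $\operatorname{Re}\operatorname{tr}(X^*)=\operatorname{Re}\operatorname{tr}(X)$ entrywise --- but since symmetry of $\langle\cdot,\cdot\rangle$ is exactly that identity applied to $X=A^*B$, the two proofs coincide in substance.
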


\begin{proof}
Symmetry of the real inner product gives
\[
\langle A,B\rangle
=\langle B,A\rangle
=\operatorname{Re}\operatorname{tr}(B^*A)
=\operatorname{Re}\operatorname{tr}(-BA^*)
=-\operatorname{Re}\operatorname{tr}(A^*B)
=-\langle A,B\rangle.
\]
Rearrange to get the result.
\end{proof}

In fact, the real subspace of anti-Hermitian matrices is the orthogonal complement of the real subspace of Hermitian matrices.
This can be seen by dimension counting:
Hermitian matrices have dimension $d+4\binom{d}{2}$, while anti-Hermitian matrices have dimension $3d+4\binom{d}{2}$, and the sum of these is $4d^2$, i.e., the dimension of $\mathbb{H}^{d\times d}$.

Given $x\in\mathbb{H}^{d}\setminus\{0\}$, let $S(x)$ denote the $3$-dimensional real subspace of $\mathbb{H}^{d\times d}$ defined by
\[
S(x)
:=\{xzx^*:z\in\mathbb{H},~\operatorname{Re}z=0\}.
\]
For each $A\in S(x)$, it holds that
\[
A^*
=(xzx^*)^*
=x\overline{z}x^*
=-xzx^*
=-A.
\]
As such, $S(x)$ is actually a subspace of the anti-Hermitian matrices.
Observe that for each $u,v\in\{\mathrm{i},\mathrm{j},\mathrm{k}\}$, it holds that
\begin{align*}
\langle xux^*,xvx^*\rangle
&=\operatorname{Re}\operatorname{tr}((xux^*)^*xvx^*)\\
&=\operatorname{Re}\operatorname{tr}(x\overline{u}x^*xvx^*)\\
&=\operatorname{Re}\operatorname{tr}(\overline{u}x^*xvx^*x)
=\|x\|^4\cdot\operatorname{Re}(\overline{u}v)
=\left\{\begin{array}{cl}
\|x\|^4&\text{if }u=v\\
0&\text{otherwise.}
\end{array}\right.
\end{align*}
Thus, $\{x\mathrm{i}x^*,x\mathrm{j}x^*,x\mathrm{k}x^*\}$ is an orthogonal basis for $S(x)$.
With this, we prove the following:

\begin{theorem}
\label{thm.cousin object}
Consider unit vectors $\{x_k\}_{k\in[n]}$ in $\mathbb{H}^d$.
\begin{itemize}
\item[(a)]
If $\{x_k\}_{k\in[n]}$ is equiangular, then $\{S(x_k)\}_{k\in[n]}$ is equi-isoclinic.
\item[(b)]
If $\{x_k\}_{k\in[n]}$ is a projective $2$-design, then $\{S(x_k)\}_{k\in[n]}$ is tight in anti-Hermitian space.
\end{itemize}
\end{theorem}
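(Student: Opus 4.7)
The plan is to work in the orthonormal basis $\{x_k\mathrm{i}x_k^*,\,x_k\mathrm{j}x_k^*,\,x_k\mathrm{k}x_k^*\}$ for $S(x_k)$ already identified in the excerpt, and to compute the $3\times 3$ cross-Gramians $G_{k\ell}$ explicitly. Repeating the same cyclic trace manipulation used just before the theorem, each entry reduces to a scalar quaternionic expression: for $u,v\in\{\mathrm{i},\mathrm{j},\mathrm{k}\}$,
\[
\langle x_k u x_k^*,\,x_\ell v x_\ell^*\rangle
=\operatorname{Re}\operatorname{tr}(x_k\overline{u}x_k^*\,x_\ell v x_\ell^*)
=\operatorname{Re}(\overline{u}\,q\,v\,\overline{q}),
\]
where $q:=x_k^*x_\ell\in\mathbb{H}$ and we have used $x_\ell^*x_k=\overline{q}$ to collapse a trace of rank-one products into a product of four quaternionic scalars.

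The key structural observation is that the map $\phi_q\colon v\mapsto qv\overline{q}$ preserves $\operatorname{Im}\mathbb{H}$: if $v^*=-v$, then $(qv\overline{q})^*=-qv\overline{q}$. Moreover $|qv\overline{q}|=|q|^2|v|$ since the quaternionic absolute value is multiplicative, so the restriction $\phi_q\colon\operatorname{Im}\mathbb{H}\to\operatorname{Im}\mathbb{H}$ is $|q|^2$ times a linear isometry; that is, $\phi_q=|q|^2 R_{k\ell}$ for some orthogonal transformation $R_{k\ell}$ on $\operatorname{Im}\mathbb{H}\cong\mathbb{R}^3$. Since $\operatorname{Re}(\overline{u}w)$ is the standard real inner product on $\operatorname{Im}\mathbb{H}$, the matrix of $\phi_q$ in the orthonormal basis $\{\mathrm{i},\mathrm{j},\mathrm{k}\}$ is precisely $G_{k\ell}$, and hence
\[
G_{k\ell}=|\langle x_k,x_\ell\rangle|^2\,R_{k\ell},\qquad R_{k\ell}^*R_{k\ell}=I_3.
\]

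From this the two claims fall out. For (a): $G_{k\ell}^*G_{k\ell}=|\langle x_k,x_\ell\rangle|^4\,I_3$, and equiangularity of $\{x_k\}$ makes the scalar $\alpha:=|\langle x_k,x_\ell\rangle|^4$ independent of the pair $(k,\ell)$ with $k\neq\ell$. For (b): $\|G_{k\ell}\|_F^2=3|\langle x_k,x_\ell\rangle|^4$, while the same cyclic trace trick gives $\langle x_kx_k^*,x_\ell x_\ell^*\rangle=|\langle x_k,x_\ell\rangle|^2$, so the projective $2$-design second-moment condition with $m=2$ and $N=2d$ rearranges to
\[
\frac{1}{n^2}\sum_{k\in[n]}\sum_{\ell\in[n]}\|G_{k\ell}\|_F^2
=\frac{3}{n^2}\sum_{k\in[n]}\sum_{\ell\in[n]}|\langle x_k,x_\ell\rangle|^4
=\frac{3\,m(m+1)}{N(N+1)}
=\frac{9}{d(2d+1)}
=\frac{r^2}{\operatorname{dim}V},
\]
with $r=3$ and $V$ the $d(2d+1)$-dimensional real space of anti-Hermitian $d\times d$ quaternionic matrices.

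The only technical step requiring care is the identification of $\phi_q|_{\operatorname{Im}\mathbb{H}}$ as a scaled rotation, which is essentially the classical surjection from unit quaternions onto $\operatorname{SO}(3)$; in our setting it reduces to the three bookkeeping facts that $\phi_q$ is $\mathbb{R}$-linear, stabilizes $\operatorname{Im}\mathbb{H}$, and scales norms by $|q|^2$. After that is established, the remainder is a formal translation between the design identities and the fusion-frame identities.
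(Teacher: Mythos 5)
Your proof is correct and takes essentially the same route as the paper: the same cyclic-trace reduction of the cross-Gramian entries to $\operatorname{Re}(\overline{u}\,q\,v\,\overline{q})$ with $q=x_k^*x_\ell$, the same identification of $G_{k\ell}$ as $|q|^2$ times an orthogonal matrix (the paper normalizes $z=q/|q|$ and invokes the classical unit-quaternion-to-$\operatorname{SO}(3)$ map, which is exactly your $\phi_q$ observation), and the identical frame-potential computation for part (b). The only cosmetic difference is that you use mere orthogonality of $R_{k\ell}$ where the paper notes special orthogonality, and you absorb the $x_k^*x_\ell=0$ case into the factorization rather than treating it separately; neither affects the argument.
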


\begin{proof}
Given unit vectors $x,y\in\mathbb{H}^d$, we are interested in the cross-Gramian $G_{xy}$ between $\{x\mathrm{i}x^*,x\mathrm{j}x^*,x\mathrm{k}x^*\}$ and $\{y\mathrm{i}y^*,y\mathrm{j}y^*,y\mathrm{k}y^*\}$.
The entry indexed by $(u,v)\in\{\mathrm{i},\mathrm{j},\mathrm{k}\}^2$ is given by
\begin{align}
\langle xux^*,yvy^*\rangle
=\operatorname{Re}\operatorname{tr}((xux^*)^*yvy^*)
\nonumber
&=\operatorname{Re}\operatorname{tr}(x\overline{u}x^*yvy^*)\\
\label{eq.quaternionic phase}
&=\operatorname{Re}\operatorname{tr}(\overline{u}x^*yvy^*x)
=\operatorname{Re}(\overline{u}x^*yvy^*x).
\end{align}
If $x^*y=0$, then $G_{xy}=0$.
Otherwise, define $z:=\frac{x^*y}{|x^*y|}$ and continue \eqref{eq.quaternionic phase}:
\[
\langle xux^*,yvy^*\rangle
=\operatorname{Re}(\overline{u}x^*yvy^*x)
=|x^*y|^2\cdot\operatorname{Re}(\overline{u}zvz^{-1}).
\]
It follows that $G_{xy}$ equals $|x^*y|^2$ times a matrix representation of the special orthogonal map $q\mapsto zqz^{-1}$ over imaginary $q\in\mathbb{H}$.
This implies (a).

For (b), we apply the facts that $G_{xy}=|x^*y|^2\cdot Q$ for some $Q\in\operatorname{SO}(3)$ and
\begin{align*}
|x^*y|^2
=\overline{x^*y}x^*y
=y^*xx^*y
=\operatorname{Re}\operatorname{tr}(y^*xx^*y)
=\operatorname{Re}\operatorname{tr}(xx^*yy^*)
=\langle xx^*,yy^*\rangle
\end{align*}
in order to compute the frame potential of $\{S(x_k)\}_{k\in [n]}$:
\begin{align*}
\frac{1}{n^2}\sum_{k\in[n]}\sum_{\ell\in[n]}\|G_{x_kx_\ell}\|_F^2
&=\frac{1}{n^2}\sum_{k\in[n]}\sum_{\ell\in[n]}3|x_k^*x_\ell|^4\\
&=\frac{3}{n^2}\sum_{k\in[n]}\sum_{\ell\in[n]}\langle x_kx_k^*,x_\ell x_\ell^*\rangle^2
=3\cdot\frac{m(m+1)}{N(N+1)}
=\frac{9}{d(2d+1)}.
\end{align*}
It follows that $\{S(x_k)\}_{k\in[n]}$ forms a tight fusion frame in the $d(2d+1)$-dimensional real vector space of $d\times d$ quaternionic anti-Hermitian matrices. 
\end{proof}

Theorem~\ref{thm.cousin object} implies that every tight projective $2$-design for $\mathbb{H}^d$ corresponds to an equi-isoclinic tight fusion frame of $d(2d-1)$ subspaces of $\mathbb{R}^{d(2d+1)}$ of dimension $3$.
To date, such designs are only known to exist for $d\in\{2,3\}$.
First, $\mathbb{HP}^1$ is isometric to $S^4$, and so the six vertices of a $5$-dimensional regular simplex easily deliver a tight projective $2$-design for $\mathbb{H}^2$; this in turn determines $6$ subspaces of $\mathbb{R}^{10}$ of dimension $3$.
The $d=3$ case is resolved by Theorem~4.12 in~\cite{CohnKM:16}, which uses a variant of the Newton--Kantorovich theorem to obtain a computer-assisted proof of the existence of a $15$-point simplex in $\mathbb{HP}^2$; this determines $15$ subspaces of $\mathbb{R}^{21}$ of dimension $3$.
The authors are not aware of any other constructions of equi-isoclinic tight fusion frames with these parameters.
Also, it is open whether tight projective $2$-designs exist for $\mathbb{H}^d$ with $d>3$.

\section*{Acknowledgments}

The first part of this paper was inspired by a beautiful talk~\cite{Paulsen:online} given by Vern Paulsen at the Codes and Expansions online seminar.
Much of this work was conducted during the SOFT 2020:\ Summer of Frame Theory virtual workshop. DGM was partially supported by AFOSR FA9550-18-1-0107 and NSF DMS 1829955.

\end{document}